\newtheorem{proposition}{Proposition}
\newtheorem{corollary}{Corollary}
\newcommand{\OO}{\mathcal{O}}
\DeclareMathOperator{\PG}{PG}
\DeclareMathOperator{\Var}{Var}
\DeclareMathOperator{\diag}{diag}
\begin{document}

\begin{frontmatter}
    \title{Distributed Computation for Marginal Likelihood based Model Choice \thanksref{T1}}
    \runtitle{Distributed Computation for Marginal Likelihood based Model Choice}

    \begin{aug}
    \author{\fnms{Alexander Buchholz}\thanksref{addr4,addr1}\ead[label=e1]{ab2603@cantab.ac.uk}},
    \author{\fnms{Daniel Ahfock}\thanksref{addr4,addr3}}%
    \and
    \author{\fnms{Sylvia Richardson}\thanksref{addr2}
    \ead[label=e3]{third@somewhere.com}
    \ead[label=u1,url]{http://www.foo.com}}
    
    \runauthor{Buchholz, Ahfock and Richardson}
    
    \address[addr4]{Equal contributions
    }

    \address[addr1]{Amazon Berlin, Germany, work done while at the University of Cambridge,
        \printead{e1} %
    }
    
    \address[addr2]{MRC Biostatistics Unit, University of Cambridge, UK
    }
    
    \address[addr3]{School of Mathematics and Physics, The University of Queensland, AUS
    }

    \end{aug}
    
    \begin{abstract}
        We propose a general method for distributed Bayesian model choice, using the marginal likelihood, 
        where a data set is split in non-overlapping subsets. These subsets are only accessed locally by individual workers and no data is shared between the workers. 
        We approximate the model evidence for the full data set through Monte Carlo sampling from the posterior 
        on every subset generating a model evidence per subset. The results are 
        combined using a novel approach which corrects for the splitting using summary statistics 
        of the generated samples. 
        Our divide-and-conquer approach enables Bayesian model choice in the large data setting, 
        exploiting all available information but limiting communication between workers. We derive theoretical error bounds that quantify the resulting trade-off between computational gain and loss in precision. 
        The embarrassingly parallel nature yields important speed-ups when used on massive data sets as illustrated by our real world experiments. 
        In addition, we show how 
        the suggested approach can be extended to model choice within a reversible jump setting that explores multiple feature combinations within one run. 

    \end{abstract}
    
    \begin{keyword}[class=MSC]
    \kwd[Primary ]{62F15}
    \kwd[; secondary ]{68W15}
    \end{keyword}
    
    \begin{keyword}
    \kwd{Marginal likelihood}
    \kwd{Distributed computation}
    \end{keyword}
    
\end{frontmatter}

\section{Introduction}
The marginal likelihood, also known as the posterior normalising constant or the Bayesian model evidence, is a quantity that is notoriously difficult to calculate, 
but crucial for model selection in a Bayesian setting \citep{kass1995bayes, robert2007bayesian}.  Evaluation of the marginal likelihood is particularly computationally challenging with massive data sets that are too large to fit in the memory of a single computational node. 
Distributed computing is attractive in such a situation, as the data set can be split into smaller manageable subsets which can be allocated to different nodes and then processed in parallel. Such divide-and-conquer approaches are particularly useful in settings where communication among different workers is costly or limited.
Moreover, privacy concerns, governance issues, data security or institutional constraints often make sharing data difficult, 
a situation which is common when processing medical data. We suggest a novel method for the distributed calculation of the marginal likelihood with large data sets under communication constraints.

Distributed Bayesian computation largely falls in the MapReduce or split-apply-combine frameworks for computation \citep{dean2008mapreduce, wickham2011split}. The distributed Bayesian computational workflow can be described in three key steps:
\begin{itemize}
    \item Split: Divide the data set into subsets and distribute across nodes (workers).
    \item Apply: Each worker independently computes a posterior distribution based on a subset.
    \item Combine: Aggregate the worker posterior distributions to form a consensus.
\end{itemize}
The combine step involves the synthesis of multiple probability distributions, and falls under the general umbrella of belief aggregation, a well studied topic in meta-analysis and probabilistic forecasting \citep{west_1984_aggregation, genest_1984_characterization, dietrich_2010_bayesian}. Our goal is to determine how marginal likelihood based Bayesian model choice can be operationalised within the aforementioned computational framework. We find that belief aggregation rules for model choice differ in some key aspects compared to fixed model inference and examine the theoretical and computational consequences.

In summary, our main contributions are the following. 
\begin{itemize}
    \setlength\itemsep{-0.2em}
    \item We derive a general decomposition of the marginal likelihood that enables efficient divide-and-conquer calculation 
    without accessing the data in one single place. 
    The combination of the results requires minimal communication and no exchange of data. 
    The computational complexity per worker is $\mathcal{O}(n/S)$ instead of $\mathcal{O}(n)$ on a single machine, 
    where $n$ is the number of observations and $S$ the number of workers. 
    \item We provide a theoretical analysis of two different algorithms for distributed calculation of the marginal likelihood. The first is a simulation consistent approach making use of data augmentation, and the second is an approximate approach relying on local normal approximations. Error bounds are developed for the approximate approach. 
    \item We illustrate the performance of our method on several challenging applications with millions of data points and show that the computation time is reduced by several orders of magnitude, 
    incurring only a negligible bias.
    \item We show how to apply our approach in a reversible jump setting where an MCMC sampler moves between different feature spaces. 
\end{itemize}

The rest of our work is structured as follows. We discuss related work in Section \ref{sec:relatedwork} and 
introduce relevant background material in Section \ref{sec:background}. Then we present our main contributions
on distributed Bayesian model choice in Section \ref{sec:distributedmodelchoice}. 
In Section \ref{sec:experiments} we demonstrate the applicability of our approach on several data sets and models 
before discussing possible extensions in Section \ref{sec:discussion}.

\section{Related Work} \label{sec:relatedwork} 
Previous work on computationally efficient methods for processing data sets with massive sample sizes can generally be divided into two broad categories. \citet{bardenet_markov_2017}, \citet{jahan2020survey} and \citet{zhu2017big} provide reviews focusing on Bayesian methods for big data. 

A first stream of work focuses on speeding up computation using mini batches (i.e., random subsets) of the entire data. 
Initially introduced via optimisation in the field of machine learning, the idea of approximating a posterior using 
mini batches has received substantial attention since the work of \citet{welling2011bayesian} on stochastic gradient Langevin sampling 
and the work of \citet{hoffman2013stochastic} on mini batch sampling for variational inference. 
Subsequent work on this family of algorithms has included theoretical analysis and practical extensions, see, for example, \citet{chen2014stochastic,alquier2016noisy,quiroz2019speeding,dang2019hamiltonian}.  
The calculation of normalising constants, however, has received less attention despite the work of \citet{lyne2015, gunawan2018subsampling}. 

The second line of work aims to make use of parallel processing for reducing computation time. The idea of using a divide-and-conquer approach has seen interest in both the statistics and the machine learning community, 
see, for example, \citet{deisenroth2015distributed} for an application in Gaussian processes and \citet{jordan2019communication} 
for a general approach under communication constraints. A variety of follow-up has been sparked by the work of \citet{scott2016bayes}. 
The consensus Monte Carlo (CMC) approach performs posterior sampling on data shards and combines the results on a 
single worker. This approach has been discussed and improved both from a theoretical and practical perspective 
\citep{wang2013parallelizing, neiswanger2013asymptotically, minsker14, scott2017comparing, srivastava2018scalable, zhang2018robust, szabo2019asymptotic}. 
The idea of distributed computation has since been picked up in different communities as, for example, in sequential Monte Carlo 
\citep{rendell2018global} and expectation propagation \citep{gelman2017expectation, barthelme2018divide}. 
Combining different models using different data sources has also received substantial attention, see for instance \citet{goudie2019joining, jacob2017better}. 
The concept of federated learning \citep{li2020federatedlearning}, where the estimation of a model is achieved in a highly distributed setting with repeated rounds of communication between a central node and workers, is another adjacent field to our work. Despite growing interest in Bayesian federated learning \citep{chen2020fedbe,yurochkin2019bayesian}, Bayesian model choice has seen little investigation. 

We will proceed under the assumption that there are barriers to implementing a mini-batch based algorithm. Storing the data in one location may be infeasible, or privacy restrictions and communication costs rule out the possibility for one central node to be connected to each site with a subset of the data. Situations where this may occur include the analysis of electronic health records and the analysis of genetic data from large cohorts. It is natural to consider the Laplace approximation to the marginal likelihood given the large sample size \citep{kass1995bayes}. However, the assumed barriers impede the ability to compute directly the Laplace approximation as determining the posterior mode and Hessian would require extensive communication between sites and the central node, see, for example, \citet{mcmahan2017communication, safaryan2021fednl}. Divide-and-conquer approaches will still be feasible in such a setting.
\section{Background}
\label{sec:background}

We first give an overview on Bayesian model choice using the marginal likelihood, and useful computational techniques for estimation of the marginal likelihood. We then discuss core principles for distributed Bayesian inference, and cover the use of normal approximations in the consensus Monte Carlo algorithm by \citet{scott2016bayes}.

\subsection{Bayesian model choice} \label{ssec:bmchoice} 
\paragraph{The normalising constant}
We define the posterior distribution of a parameter $\theta \in \Theta \subset \mathbb{R}^p $ given data $y$ as $ p(\theta| y) = p(y|\theta) p(\theta) / p(y)$. It depends on the unknown normalising constant $p(y)$. This normalising constant is the marginal likelihood of the data given the model, also called model evidence, and is calculated as 
$$p(y) = \int_\Theta p(y|\theta) p(\theta) d \theta. $$ 
In most settings this constant is not analytically tractable, as it involves the integration over a potentially high dimensional parameter space $\Theta$. 
Various sampling based methods are available for the approximate calculation of the evidence, for example, importance sampling (IS) \citep{geweke1989bayesian}, sequential Monte Carlo \citep{del2006sequential}, nested sampling \citep{skilling2006nested} or bridge sampling \citep{meng1996simulating, gelman1998simulating}. See \citet{Knuth2015} for a review of the different methods. 

\paragraph{The Bayes factor}
The posterior distribution over a set of competing models is defined using the marginal likelihood and a prior distribution on models. Models may be distinguished by different choices of link functions, hyper parameters or feature spaces. The posterior probability of model $m_i$ is given by
\begin{eqnarray} \label{eq:postprobmodel}
p(m_i | y) = \frac{p(y | m_i) p(m_i)}{ \sum_{k=1}^K p(y|m_k) p(m_k)}, 
\end{eqnarray}
where  $p(y | m_i) = \int_\Theta p(y|\theta, m_i) p(\theta|m_i) d \theta$, with both prior and likelihood now being dependent on the model $i$.  Models may also be compared using the Bayes factor (BF) \citep{kass1995bayes}.  The BF is calculated as 
\begin{eqnarray*} \label{eq:bayesfactor}
B_{m_1, m_2} =  \frac{p(y | m_1)}{p(y|m_2)}  = \frac{p(m_1 | y)}{p(m_2 | y)}  \times \frac{p(m_2)}{p(m_1)},
\end{eqnarray*}
where $p(m_i)$ denotes the prior probability of model $i$, $p(m_i|y)$ denotes the posterior probability of the model given the data and
$p(y|m_i)$ for $i=1,2$ denotes the probability of the data given the model. For two models the BF allows to select the model with highest posterior probability while adjusting for the prior odds. 
It is an alternative to standard statistical testing when it comes to model choice.

\paragraph{Data augmentation and conditionally conjugate models}  
Data augmentation has proven to be an important technique for Bayesian computation, facilitating both posterior sampling and marginal likelihood estimation \citep{meng_2001_data_augmentation, tanner_2010_data}. The rationale is to introduce a latent variable $z \in \mathcal{Z}$ such that the complete data model $p(y, z | \theta)$ is more mathematically and computationally tractable than the marginal model $p(y | \theta) = \int_{\mathcal{Z}} p(y, z, |\theta) \ dz$. Example applications include binary regression models, mixture models and factor analysers \citep{meng_2001_data_augmentation, holmes2006, tanner_2010_data}. A particularly useful strategy
is to construct a sampler on the extended space of $\theta, z \sim p(\theta, z | y)$ and to marginalise out $z$ once a cloud of samples has been generated. This can be an efficient strategy if the full conditionals $p(\theta | y, z)$ and $p(z | y, \theta)$ have a known distribution 
and allows the construction of a Gibbs sampler. Conditionally conjugate models are obtained when selecting the prior $p(\theta)$ to be conjugate to the complete data likelihood $p(y, z, | \theta)$ \citep{gelman_2013_bayesian}. Prior $p(\theta)$ and the conditional posterior $p(\theta| y, z)$ are part of the same distribution family and hence sampling from the conditional posterior becomes straightforward for commonly chosen priors. 
The posterior distribution can be represented as $p(\theta | {y}) = \int p({\theta}| z, y)p(z | y) \ dz$. For any ordinate $\theta$, the log marginal likelihood satisfies
\begin{align*}
    \log p(y) &= \log p(\theta) + \log p(y | \theta) - \log \int p({\theta}| z, y)p(z | y) \ dz
\end{align*}
Following Chib \citep{chib1995marginal}, a simulation consistent estimator of the model evidence is 
$$\log \widehat{p}(y) =  \log p(\theta) + \log p(y|\theta) - \log \sum_{i=1}^{N}p(\theta | z^{i}, y)/N, $$
where the $z^{i}$ are samples from the posterior distribution $p(z | y)$.

\subsection{Distributed Bayesian inference}
We assume that we observe data $y \in \mathcal{Y}$ consisting of $n$ data points that can be split into $S$ non overlapping data shards $y_s$, 
potentially containing several observations such that $y = \{ y_1, \cdots, y_S \}$. The likelihood is assumed to satisfy the independence condition $p(y |\theta)=\prod_{s=1}^{S}p(y_{s}|\theta)$. As outlined in the introduction, the general strategy for distributed Bayesian inference is to allocate a shard of data to each worker who will then compute a local posterior distribution on the basis of the subset $y_{s}$. The major challenge in this approach is the formation of a consensus probability distribution given the individual posterior distributions computed by each worker. 

An important tool for the synthesis of probability distributions is product-of-experts pooling \citep{genest_1984_characterization, hinton_2002_training, dietrich_2010_bayesian}. Taking $\theta$ to be the quantity of interest, the products-of-experts consensus distribution $q(\theta)$ is formed by taking the product of the worker distributions $\{q_{s}(\theta)\}_{s=1}^{S}$, so $q(\theta) \propto \prod_{s=1}^{S}q_{s}(\theta)$.  The significance of product-of-experts pooling is readily seen when the model is treated as being fixed and known. The full posterior can be represented as
\begin{eqnarray} \label{eq:posteriordecomp1}
p(\theta | y) \propto \prod_{s=1}^S p(y_s|\theta) p(\theta)^{1/S},
\end{eqnarray}
where $p(y_s|\theta)$ is a likelihood factor over the shard $y_s$ and $p(\theta)^{1/S}$ is the unnormalised subprior, i.e. a fraction of the initial prior $p(\theta)$. Each worker may compute a so-called subposterior $\widetilde{p}(\theta | y_{s})$ on a shard of data using the fractionated prior $\tilde{p}(\theta | y_s) \propto  p(y_s|\theta) p(\theta)^{1/S}$. The full posterior distribution can  then be obtained by product-of-experts pooling of the subposteriors \citep{huang_2005_sampling, scott2016bayes}
\begin{eqnarray} \label{eq:posteriordecomp2}
p(\theta | y) \propto \prod_{s=1}^S \tilde{p}(\theta|y_s).
\end{eqnarray} 
This important property does not transfer to model selection. There is no analogous product-of-experts decomposition of the full posterior ditribution over models
\begin{align} \label{eq:poe_models}
    p(m_{i} | y) \not\propto \prod_{s=1}^{S}\tilde{p}(m_{i} | y_{s}),
\end{align}
where the subposterior model probabilities are defined by $\widetilde{p}(m_{i} | y_{s}) \propto \widetilde{p}(y | m_{i})p(m_{i})^{1/S}$ and the subposterior evidence $\widetilde{p}(y_{s} | m_{i}) = \int_\Theta p(y_{s}|\theta, m_{i})\tilde{p}(\theta | m_{i}) \ d\theta$ is determined using the normalised subprior $\widetilde{p}(\theta|m_{i})=p(\theta|m_{i})^{1/S}/\int_\Theta p(\theta |m_{i})^{1/S} d\theta$. This is due to the fact that in general,
$$
p(y|m_{i}) = \int_\Theta \prod_{s=1}^S p(y_s|\theta, m_{i}) p(\theta |m_{i})^{1/S} d \theta \neq \prod_{s=1}^S \int_\Theta p(y_s|\theta, m_{i}) p(\theta | m_{i})^{1/S} d \theta.
$$
The lack of a product-of-experts decomposition for the full posterior on models \eqref{eq:poe_models} suggests that the protocol for distributed Bayesian inference in the fixed model setting may not be effective for model choice. If the goal is to recover the full posterior model probabilities $p(m_{i} | y)$, it is no longer sufficient to compute subposterior distributions over models in the apply step and then form a consensus distribution using product-of-experts pooling in the combine step. The correct belief aggregation procedure for model selection is developed in Section 4. 

In practice, each worker will typically return a Monte Carlo or analytic approximation of the subposterior distribution. An important consideration with distributed Bayesian inference is how the mean squared error of the consensus scales with the number of subsets $S$ \citep{bardenet_markov_2017, scott2016bayes, neiswanger2013asymptotically}. There is almost always a trade-off between the error and the computational benefits afforded by distributing the workload across more nodes $S$. This dynamic will be a key feature in our theoretical analysis and computational experiments. 
\subsection{Consensus Monte Carlo}
The Consensus Monte Carlo algorithm \citep{scott2016bayes} is based on the product-of-experts posterior decomposition \eqref{eq:posteriordecomp2}. The key to the approach is that the product of normal subposteriors is proportional to another normal distribution:
$$ \prod_{s=1}^S \mathcal{N}(\theta | \mu_s, \Sigma_s) \propto \mathcal{N}(\theta | \mu, \Sigma), $$ 
where the overall variance and mean are obtained as $\Sigma^{-1} = \sum_{s=1}^S \Sigma_s^{-1} $ and 
$\mu = \Sigma \sum_{s=1}^S \Sigma_s^{-1} \mu_s$. 

The consensus Monte Carlo algorithm samples $N$ points $\theta_s^1, \cdots, \theta_s^N$ from each of the individual subposteriors $ \tilde{p}(\theta|y_s)$. 
This sampling can be achieved, for example, using standard MCMC algorithms like random walk Metropolis-Hasting \citep{hastings1970monte, 10.1093/biomet/asz066} or Hamiltonian Monte Carlo \citep{neal2011mcmc}. 
The samples are then recombined using a normal approximation to the subposteriors. 

The normal approximation $\mathcal{N}(\mu_s, \Sigma_s) \approx \tilde{p}(\theta|y_s)$ is based on the estimated mean $\mu_s$ and variance $\Sigma_s$ from the subposterior samples, 
using the Laplace-Metropolis approximation \citep{lewis1997estimating}. 
This approximation is asymptotically justified through the Bernstein-von-Mises (BvM) theorem (see also \citet{Ghosh:1608771}). %
The recombination of the samples from the local Markov chains is based on a weighting according to the inverse covariances of the subposteriors \citep{scott2016bayes}. Combining the sampling based approach with the normal approximation has the advantage that more features of the posterior distribution are captured compared to the use of a plain normal approximation where the sampling step would be omitted. See \citet{scott2016bayes, scott2017comparing} for more details. %

\section{Distributed Model Choice} \label{sec:distributedmodelchoice}
We will now introduce and discuss our contributions making use of the previously introduced background material. 
\paragraph{Decomposing the model evidence}
Using Bayes' Theorem and some elementary algebra, it is possible to obtain an identity for the marginal likelihood that lends itself to distributed computation. 
\begin{proposition} \label{thm:prop1}
The model evidence for the full data can be decomposed as 
\begin{eqnarray} \label{eq:evidencedecomp}
p(y) = \alpha^S \prod_{s=1}^S \tilde{p}(y_s) \int_\Theta \prod_{s=1}^S \tilde{p}(\theta|y_s) \ d \theta , 
\end{eqnarray}
where $\tilde{p}(\theta) = \frac{p(\theta)^{1/S}}{\alpha}$ is the normalised subprior, $\alpha_\Theta = \int p(\theta)^{1/S} d \theta$ is the normalising constant of the subprior, $\tilde{p}(y_s) = \int_\Theta p(y_s|\theta) \tilde{p}(\theta)d \theta$ is the normalising constant of the subposterior and $\tilde{p}(\theta|y_s) = p(y_s|\theta) \tilde{p}(\theta)/\tilde{p}(y_s)$ denotes the normalised subposterior. 
\end{proposition}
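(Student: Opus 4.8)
The plan is to prove \eqref{eq:evidencedecomp} by a direct algebraic manipulation of the defining integral $p(y)=\int_\Theta p(y\mid\theta)p(\theta)\,d\theta$, using two elementary identities. First I would write the prior as a product of $S$ identical fractional powers, $p(\theta)=\bigl(p(\theta)^{1/S}\bigr)^S=\alpha^S\prod_{s=1}^S\tilde p(\theta)$, which is legitimate once $\alpha=\int_\Theta p(\theta)^{1/S}\,d\theta$ is finite so that $\tilde p(\theta)=p(\theta)^{1/S}/\alpha$ is a genuine probability density. Second I would invoke the conditional-independence structure of the data shards, $p(y\mid\theta)=\prod_{s=1}^S p(y_s\mid\theta)$, which is exactly the assumption underlying the CMC decomposition \eqref{eq:posteriordecomp1}.

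Combining these gives $p(y)=\alpha^S\int_\Theta\prod_{s=1}^S\bigl[p(y_s\mid\theta)\tilde p(\theta)\bigr]\,d\theta$. The next step is to recognise each bracketed factor as an unnormalised subposterior and rewrite it in terms of its own normalising constant: by definition $p(y_s\mid\theta)\tilde p(\theta)=\tilde p(y_s)\,\tilde p(\theta\mid y_s)$ with $\tilde p(y_s)=\int_\Theta p(y_s\mid\theta)\tilde p(\theta)\,d\theta$. Substituting and pulling the $\theta$-free constants $\tilde p(y_s)$ out of the integral yields $p(y)=\alpha^S\prod_{s=1}^S\tilde p(y_s)\int_\Theta\prod_{s=1}^S\tilde p(\theta\mid y_s)\,d\theta$, which is \eqref{eq:evidencedecomp}. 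The argument is essentially bookkeeping: nothing beyond Tonelli's theorem is needed to justify moving the finite constants through the integral.

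I do not expect a serious obstacle; the content of the statement is that the decomposition is \emph{exact} (no approximation), and the only care required is with the regularity conditions that make every object well defined. Concretely one needs $p(\theta)^{1/S}$ to be integrable (so $\alpha\in(0,\infty)$), each $\tilde p(y_s)$ to be finite and strictly positive (so the subposteriors are proper densities and the division is valid), and the sampling model to factorise across shards conditional on $\theta$. I would state these as standing assumptions — they are mild and hold in all the conjugate and generalised-linear examples treated later — after which the chain of equalities is unconditional. The one point a reader might query is whether the final integral $\int_\Theta\prod_s\tilde p(\theta\mid y_s)\,d\theta$ is itself finite; this needs no extra hypothesis, since it equals $p(y)/\bigl(\alpha^S\prod_s\tilde p(y_s)\bigr)$ and $p(y)<\infty$ whenever the model evidence exists at all.
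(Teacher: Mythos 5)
Your proposal is correct and uses the same two ingredients as the paper's proof: writing $p(\theta)=\alpha^S\prod_{s=1}^S\tilde p(\theta)$ and recognising $p(y_s\mid\theta)\tilde p(\theta)=\tilde p(y_s)\,\tilde p(\theta\mid y_s)$. The only (cosmetic) difference is that you manipulate the integral defining $p(y)$ directly, whereas the paper rearranges the posterior identity $p(\theta\mid y)p(y)=\alpha^S\prod_s\tilde p(\theta\mid y_s)\tilde p(y_s)$ and then integrates over $\theta$; your explicit attention to the regularity conditions ($\alpha\in(0,\infty)$, $\tilde p(y_s)\in(0,\infty)$) is a welcome addition the paper leaves implicit.
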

\begin{proof}
See Appendix.
\end{proof}
Each of the three components in the decomposition \eqref{eq:evidencedecomp} has an interpretation in terms of the generic split-apply-combine framework mentioned in the introduction. The term $\alpha^{S}$ reflects that each worker is allocated a fraction of the prior information in the split step. The subposterior evidence for each shard $\tilde{p}({y}_{s})$ is computed by the workers in the apply stage. The evaluation of the subposterior integral $\int_{\Theta} \prod_{s=1}^{S}\tilde{p}({\theta} | y_{s}) d \theta $ is a necessary step for appropriate evidence synthesis in the combine stage.

Using Proposition \ref{thm:prop1} it is straightforward to show that the full posterior distribution on models has the modified product-of-experts representation
\begin{align} \label{eq:poe_models_adjusted}
    p(m_{i} | y) \propto \left\lbrace \prod_{s=1}^{S}\tilde{p}(m_{i} | y_{s}) \right\rbrace \alpha_{i}^{S}\left( \int_\Theta \prod_{s=1}^S \tilde{p}(\theta|y_s, m_{i})\  d \theta\right),
\end{align}
recalling that the subposterior probabilities are given by $\widetilde{p}(m_{i} | y_{s}) \propto \tilde{p}(y_{s} |m_{i})p(m_{i})^{1/S}$. Comparing \eqref{eq:poe_models_adjusted} to  \eqref{eq:poe_models}, we see that a modified product-of-experts rule is necessary for distributed Bayesian model selection, as the subposterior model probabilities $\tilde{p}(m_{i} | y_{s})$ do not contain enough information to recover the full posterior model probabilities $p(m_{i} | y)$. Subset analyses will be under-powered relatively compared to the full data set analysis if the shard size is small compared to the total sample size, and this may manifest in  a bias towards smaller models in the model subposteriors $\tilde{p}(m_{i} | y_{s})$. A secondary issue is that a model may appear to fit well in each subset but be of poor fit overall. The inclusion of the integral term over the subposteriors $ \int_\Theta \prod_{s=1}^S \tilde{p}(\theta|y_s, m_{i})\  d \theta$ allows the global goodness of fit to be reconstructed by considering the overlap in the subposterior distributions. 

We will now focus on how to use the presented decomposition in an algorithm to approximate the marginal likelihood of the full data set. In a variety of settings $\alpha$ can be computed analytically. For example, if $p(\theta)$ is a normal distribution, 
fragmenting the prior amounts to an inflation of the prior variance:
$\mathcal{N}(0, \Sigma)^{1/S} \propto \mathcal{N}(0, S \Sigma)$ and $\alpha$ is thus obtained easily. 
The same holds true for a Laplace prior, where $\mathcal{L}(0, \sigma)^{1/S} \propto \mathcal{L}(0, S \sigma)$. 
However, care must be taken for some distributions of the exponential family. When $S$ is too large, 
the integral $\int_\Theta p(\theta)^{1/S} d \theta$ becomes infinite as too much mass is pushed into the tails. 
We recommend checking this on a case by case basis. 
The local evidence $\tilde{p}(y_s)$ can be calculated by every individual worker using one of the various techniques described 
at the start of Section \ref{ssec:bmchoice}. 
The most difficult part of \eqref{eq:evidencedecomp} is to evaluate the last integral 
\begin{eqnarray} \label{eq:isubbase}
    I_{\text{sub}} := \int_\Theta \prod_{s=1}^S \tilde{p}(\theta|y_s) \  d \theta.
\end{eqnarray}
We will now discuss two different strategies to estimate the integral $I_{\text{sub}}$ \eqref{eq:isubbase}, an exact approach based on data augmentation and Gibbs sampling, and an approximate approach using normal approximations. 

\subsection{Data augmentation to compute $I_{\text{sub}}$} 
A data augmentation strategy can be used to construct a simulation consistent estimator of the components in \eqref{eq:evidencedecomp}. See \citet[Chap 2]{ahfock2019thesis} for a detailed discussion. In particular, the latent variables are helpful to calculate an estimator of \eqref{eq:isubbase} as we show now. %
We assume that the latent variables $z_s$ are independent given $\theta$. Then, the augmented full data set posterior is again proportional to the subposteriors, i.e., 
\begin{eqnarray*}
     p(\theta | y_{1:S}, z_{1:S}) &\propto& p(\theta) p(y_{1:S}, z_{1:S} | \theta) \propto \prod_{s=1}^S p(y_{s}, z_{s} | \theta) p(\theta)^{1/S}, \\
     &\propto& \prod_{s=1}^S p(y_{s}, z_{s} | \theta) \tilde{p}(\theta) \propto \prod_{s=1}^S  \tilde{p}(\theta| y_{s}, z_{s}). 
\end{eqnarray*}
The conditional latent variable subposteriors on the individual workers (accessing only $y_s$) are given as as 
\begin{eqnarray*}
    \tilde{p}(z_{s}| y_{s}, \theta) \propto p(y_{s}, z_{s}| \theta) \tilde{p}(\theta) \propto  p(z_{s} | y_{s}, \theta) p(y_{s}| \theta) \propto p(z_{s}| y_{s}, \theta),
\end{eqnarray*}
and hence the subprior $\tilde{p}(\theta)$ is not material when conditioning on $\theta$.
This establishes a Gibbs sampling strategy for sampling from $\tilde{p}(\theta, z_{s}| y_{s})$ using the conditionals defined above. 
In conditionally conjugate models, where the subpriors are chosen such that $\int_\Theta \prod_{s=1}^S \tilde{p}(\theta|y_s, z_s) d \theta$ has a closed form solution, we 
can exploit the samples generated from the latent variable subposteriors to approximate $I_\text{sub}$. 
\begin{proposition} \label{thm:propisub}
Using a data augmentation scheme for the augmented data likelihood $p(y_{1:S}, z_{1:S} | \theta)$ we have
\begin{eqnarray} \label{eq:isub}
    I_{\rm{sub}} = \mathbf{E}_{\tilde{p}(z_{1:S}|y_{1:S})} \left[\int_\Theta \prod_{s=1}^S \tilde{p}(\theta|y_s, z_s) d \theta \right],
\end{eqnarray}
where $\tilde{p}(z_{1:S}|y_{1:S}) = \prod_{s=1}^S \tilde{p}(z_s|y_s)$. 
\end{proposition}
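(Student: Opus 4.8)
The plan is to recognise each normalised subposterior as the $z_s$-marginal of an augmented subposterior, take the $S$-fold product, and then interchange the $\theta$-integral with the expectation over the latent blocks. This is the same marginalisation identity already used above for the single-shard Chib estimator, now applied on each worker with the inflated subprior $\tilde{p}(\theta)$ in place of $p(\theta)$.

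First I would write, for each shard $s$, the augmented subposterior $\tilde{p}(\theta, z_s \mid y_s) = \tilde{p}(\theta \mid y_s, z_s)\,\tilde{p}(z_s \mid y_s)$ and recover the subposterior by marginalising out the latent block,
\[
\tilde{p}(\theta \mid y_s) = \int \tilde{p}(\theta \mid y_s, z_s)\,\tilde{p}(z_s \mid y_s)\, d z_s = \mathbf{E}_{\tilde{p}(z_s \mid y_s)}\!\left[\tilde{p}(\theta \mid y_s, z_s)\right].
\]
Next I would take the product over the $S$ shards. Since the latent blocks $z_1,\dots,z_S$ act on disjoint groups of observations and are therefore mutually independent with joint law $\tilde{p}(z_{1:S}\mid y_{1:S}) = \prod_{s=1}^S \tilde{p}(z_s \mid y_s)$, a product of expectations over these independent variables is a single expectation over the product measure,
\[
\prod_{s=1}^S \tilde{p}(\theta \mid y_s) = \prod_{s=1}^S \mathbf{E}_{\tilde{p}(z_s \mid y_s)}\!\left[\tilde{p}(\theta \mid y_s, z_s)\right] = \mathbf{E}_{\tilde{p}(z_{1:S}\mid y_{1:S})}\!\left[\prod_{s=1}^S \tilde{p}(\theta \mid y_s, z_s)\right].
\]
Inserting this into the definition $I_{\text{sub}} = \int_\Theta \prod_{s=1}^S \tilde{p}(\theta\mid y_s)\, d\theta$ and swapping the order of integration gives the claimed identity.

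The only point that requires care is the interchange of $\int_\Theta(\cdot)\,d\theta$ with the expectation $\mathbf{E}_{\tilde{p}(z_{1:S}\mid y_{1:S})}[\cdot]$. As the integrand $\prod_{s=1}^S \tilde{p}(\theta \mid y_s, z_s)$ is nonnegative, Tonelli's theorem justifies the swap with no further integrability assumption, so
\[
I_{\text{sub}} = \int_\Theta \mathbf{E}_{\tilde{p}(z_{1:S}\mid y_{1:S})}\!\left[\prod_{s=1}^S \tilde{p}(\theta \mid y_s, z_s)\right] d\theta = \mathbf{E}_{\tilde{p}(z_{1:S}\mid y_{1:S})}\!\left[\int_\Theta \prod_{s=1}^S \tilde{p}(\theta \mid y_s, z_s)\, d\theta\right].
\]
I do not expect a genuine obstacle here; the substance of the proposition is the structural observation that independence of the latent blocks across shards collapses the $S$ marginalising integrals into one expectation. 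This is what makes $I_{\text{sub}}$ estimable from the independent Gibbs outputs $z_s^1,\dots,z_s^N$ produced on each worker, the inner integral $\int_\Theta \prod_{s=1}^S \tilde{p}(\theta \mid y_s, z_s)\, d\theta$ being available in closed form when, as in the Polya-Gamma example, each $\tilde{p}(\theta \mid y_s, z_s)$ is Gaussian by conjugacy \eqref{eq:gaussiancond}.
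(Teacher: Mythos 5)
Your proposal is correct and follows essentially the same route as the paper's proof: marginalise each subposterior over its latent block, combine the $S$ integrals into one over the product measure $\prod_{s}\tilde{p}(z_s\mid y_s)$, and swap the $\theta$-integral with the expectation (the paper cites Fubini; your appeal to Tonelli via nonnegativity is a slightly cleaner justification of the same step). No gap.
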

\begin{proof}
See Appendix.
\end{proof}
As a a consequence, Proposition \ref{thm:propisub} suggests the following Monte Carlo estimator
\begin{eqnarray} \label{eq:isubestimator}
\hat{I}_{\text{sub}} = \frac{1}{N} \sum_{i=1}^N \int_\Theta \prod_{s=1}^S \tilde{p}(\theta|y_s, z_s^i) d \theta,
\end{eqnarray}
where $\int_\Theta \prod_{s=1}^S \tilde{p}(\theta|y_s, z_s^i) d \theta$ is calculated analytically and $z_s^i$ are 
samples obtained from, e.g., a Gibbs sampler. See the algorithmic version of the sampler in Algorithm \ref{alg:exactmethod}. 
We provide further details on data augmentation for logistic regressions using P\'{o}lya-Gamma data augmentation in the appendix. 
The variance of the proposed estimator can be understood through a connection to importance sampling. 
\begin{proposition} \label{thm:propisubvar}
Assume i.i.d. sampling from the latent variable posterior. The variance of the estimator suggested in \eqref{eq:isubestimator} is then
\begin{eqnarray*} 
\Var \hat{I}_{\rm{sub}} = \frac{I_{\rm{sub}}^2}{N} 
\Var_{\tilde{p}(z_{1:S}|y_{1:S})} \left[ \frac{p(z_{1:S}|y_{1:S})}{\prod_{s=1}^S \tilde{p}(z_s|y_s)} \right].
\end{eqnarray*}
\end{proposition}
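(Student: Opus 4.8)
The plan is to reduce the statement to a single pointwise identity: that the summand in \eqref{eq:isubestimator},
$g(z_{1:S}) := \int_\Theta \prod_{s=1}^S \tilde{p}(\theta | y_s, z_s)\, d\theta$,
equals $I_{\text{sub}}$ times the importance weight for targeting the full-data latent posterior $p(z_{1:S}|y)$ (with $y = y_{1:S}$) using the factorised proposal $\prod_{s=1}^S \tilde{p}(z_s|y_s)$. Granting this identity, the conclusion is immediate: since $\hat{I}_{\text{sub}} = \frac{1}{N}\sum_{n=1}^N g(z_{1:S}^n)$ is an average of $N$ i.i.d.\ draws $z_{1:S}^n \sim \tilde{p}(z_{1:S}|y_{1:S}) = \prod_{s=1}^S \tilde{p}(z_s|y_s)$, we have $\Var \hat{I}_{\text{sub}} = \frac{1}{N}\Var_{\tilde{p}(z_{1:S}|y_{1:S})}[g(z_{1:S})]$, and substituting $g(z_{1:S}) = I_{\text{sub}}\, p(z_{1:S}|y) / \prod_{s=1}^S \tilde{p}(z_s|y_s)$ and pulling the constant $I_{\text{sub}}^2$ out of the variance gives exactly the claimed expression.

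To prove the pointwise identity I would proceed as follows. Writing $p(y_s,z_s|\theta)$ for the complete-data likelihood on shard $s$, Bayes' rule for the augmented subposterior gives $\tilde{p}(\theta|y_s,z_s) = \tilde{p}(\theta)\, p(y_s,z_s|\theta) / \tilde{p}(y_s,z_s)$ with $\tilde{p}(y_s,z_s) = \int_\Theta \tilde{p}(\theta)\, p(y_s,z_s|\theta)\, d\theta$. Multiplying over $s$ and using $\prod_{s=1}^S \tilde{p}(\theta) = \tilde{p}(\theta)^S = p(\theta)/\alpha^S$ collapses the $\theta$-dependent part into a single factor,
$\prod_{s=1}^S \tilde{p}(\theta|y_s,z_s) = \frac{p(\theta)\prod_{s=1}^S p(y_s,z_s|\theta)}{\alpha^S \prod_{s=1}^S \tilde{p}(y_s,z_s)}$,
and integrating over $\theta$ the numerator becomes the full-data joint marginal $\int_\Theta p(\theta)\prod_{s=1}^S p(y_s,z_s|\theta)\, d\theta = p(y,z_{1:S}) = p(y)\, p(z_{1:S}|y)$, while $\prod_{s=1}^S \tilde{p}(y_s,z_s) = \prod_{s=1}^S \tilde{p}(y_s)\, \tilde{p}(z_s|y_s)$. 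Hence $g(z_{1:S}) = \frac{p(y)}{\alpha^S \prod_{s=1}^S \tilde{p}(y_s)} \cdot \frac{p(z_{1:S}|y)}{\prod_{s=1}^S \tilde{p}(z_s|y_s)}$, and Proposition \ref{thm:prop1} identifies the first factor as $I_{\text{sub}}$. As a sanity check, taking the expectation of this identity under $\tilde{p}(z_{1:S}|y_{1:S})$ sends the importance weight to one and recovers Proposition \ref{thm:propisub}.

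The only delicate part is the bookkeeping of normalising constants in the second paragraph — peeling the subprior fragments off as $p(\theta)/\alpha^S$, recognising the integrated numerator as the \emph{coupled} full-data latent marginal $p(z_{1:S}|y)$, which in general does not factorise over $s$, and matching the leftover constants against the decomposition of Proposition \ref{thm:prop1}; everything else is the standard variance of an i.i.d.\ Monte Carlo average. It is also worth noting that this computation explains the degeneracy flagged after \eqref{eq:isubestimator}: the product proposal $\prod_{s=1}^S \tilde{p}(z_s|y_s)$ discards the cross-shard dependence present in $p(z_{1:S}|y)$, so the variance of the importance weight, and hence $\Var \hat{I}_{\text{sub}}$, inflates as the number of splits $S$ grows.
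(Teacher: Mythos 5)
Your proof is correct and follows essentially the same route as the paper's: both arguments reduce the claim to the pointwise identity $\int_\Theta \prod_{s=1}^S \tilde{p}(\theta|y_s,z_s)\,d\theta = I_{\text{sub}}\, p(z_{1:S}|y_{1:S})/\prod_{s=1}^S \tilde{p}(z_s|y_s)$, established via Bayes' rule on the augmented (sub)posteriors together with the normalising-constant decomposition of Proposition \ref{thm:prop1}, and then apply the standard variance formula for an i.i.d.\ Monte Carlo average. The paper derives the identity by rearranging $p(\theta|y_{1:S},z_{1:S})$ and integrating over $\theta$, whereas you integrate the product of augmented subposteriors directly; the algebra is the same in substance.
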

\begin{proof}
See Appendix.
\end{proof}
    
This links the variance of the estimator $\hat{I}_{\text{sub}}$ to the relative coverage of the joint conditional posterior $p(z_{1:S}|y_{1:S})$ 
and the product of the conditional subposteriors $\tilde{p}(z_{1:S}|y_{1:S})$. The variance of the ratio
measures the quality of the product of the conditional subposterior as an importance sampling proposal distribution. If the tails of the proposal are thinner than the target, the variance of the estimator potentially becomes infinite. As the number of splits increases, we expect the product of the conditional subposteriors to become a worse approximation of the true subposterior, and the variance of the estimator to increase. %
Note that in practice the variance of $\hat{I}_{\text{sub}}$ will be larger than the suggested quantity due to the autocorrelation introduced by the Gibbs sampler. 

\begin{algorithm}[ht]
    \caption{Distributed model evidence computation using data augmentation}
    \label{alg:exactmethod}
 \begin{algorithmic}
    \STATE {\bfseries Input:} data $y$, number of chunks $S$, likelihood $p(\cdot|\theta)$, prior $p(\theta)$.
    \STATE {\bfseries Split}: Divide data in $S$ chunks $y_1, \cdots, y_S$.
    \STATE {\bfseries Apply} in parallel:
    \FOR{$s=1$ {\bfseries to} $S$}
    \STATE Sample $(\theta_s^1, z_s^1), \cdots, (\theta_s^N, z_s^N) \sim \tilde{p}(\theta, z | y_s)$ using Gibbs sampling. 
    \STATE Calculate and store $\tilde{p}(y_s) = \int p(y_s|\theta) \tilde{p}(\theta) d \theta$ using Chib's method.
    \ENDFOR
    \STATE {\bfseries Combine:}
    \STATE Calculate $ p(y) = \alpha^S \prod_{s=1}^S \tilde{p}(y_s) \int_\Theta \prod_{s=1}^S \tilde{p}(\theta| y_s) d \theta  $, 
    \STATE using 
    $\int_\Theta \prod_{s=1}^S \tilde{p}(\theta| y_s) d \theta = \mathbf{E}_{\tilde{p}(z_{1:S} | y_{1:S})} \left(  \int_{\Theta} \prod_{s=1}^S \tilde{p}(\theta|y_s, z_s) d \theta \right) \approx \eqref{eq:isubestimator}$.
 \end{algorithmic}
 \end{algorithm}

\subsection{A normal approximation to $I_{\text{sub}}$} \label{ssec:approximateapproach} 
Models relying on data augmentation are a rather restricted class of models that require tailor-made samplers. A more widely applicable approach to estimate $I_{\text{sub}}$ \eqref{eq:isubbase} is a normal approximation to each subposterior $\widetilde{p}(\theta | y_{s})\approx \mathcal{N}(\theta | \mu_{s}, \Sigma_{s})$, yielding the following approximation of the integral:
\begin{align*}
   I_{\text{sub}} &= \int_{\Theta} \prod_{s=1}^{S} \tilde{p}({\theta} | y_{s}) \ d\theta \approx \int_{\Theta} \prod_{s=1}^{S}\mathcal{N}({\theta}\mid {{\mu}}_{s}, {{\Sigma}}_{s}) \ d{\theta}.
\end{align*}
The mean and covariance parameters for the subposterior normal approximations, $\mu_{s}, \Sigma_{s}$ can be estimated given subposterior samples generated by each worker. Using the fact that the product of normal density functions is proportional to another normal density we can provide the closed form expression
\begin{eqnarray} \label{eq:gaussianisub}
\int_{\Theta} \prod_{s=1}^S \mathcal{N}(\theta| \mu_s, \Sigma_s) \ d \theta
= \exp \left(\sum_{s=1}^S \xi_s - \xi \right), 
\end{eqnarray}
where 
\begin{eqnarray*}
\eta_s = \Sigma_s^{-1} \mu_s, \Lambda_s = \Sigma_s^{-1}, \\ \xi_s = - \frac{1}{2}(p \log 2 \pi - \log |\Lambda_s| + \eta_s^t\Lambda_s \eta_s), \\
\eta = \sum_{s=1}^S \Sigma_s^{-1} \mu_s, ~ \Lambda = \sum_{s=1}^S \Sigma_s^{-1}, \\ \xi = - \frac{1}{2}(p \log 2 \pi - \log |\Lambda| + \eta^t\Lambda \eta).
\end{eqnarray*}
The normal approximations to the subposterior $\widetilde{p}(\theta | y_{s})\approx \mathcal{N}(\theta | \mu_{s}, \Sigma_{s})$ can be justified under the usual conditions for a Bernstein-von-Mises theorem to hold, see e.g. \citet{Ghosh:1608771}. Due to the involved matrix inversions and calculation of determinants the complexity of calculating \eqref{eq:gaussianisub} is $O(S p^3)$. 
We approximate the normalising constants $ \tilde{p}(y_s)$ using any of the techniques discussed at the beginning of Section \ref{ssec:bmchoice}. Note that the normal approximation is only required for estimating $I_{\text{sub}}$, and does not influence the estimation of $\widetilde{p}(y_{s})$. We hence suggest Algorithm \ref{alg:approxmethod} for the estimation of the marginal likelihood. 

\paragraph{Using mixtures of normals}
If the posterior distribution is multimodal, a simple normal approximation will result in a poor approximation of the integral of the product of the subposteriors. 
As noted by \citet{neiswanger2013asymptotically, scott2017comparing}, a mixture of normals or kernel density estimators can be used to approximate the posterior. 
These approaches have been shown to work well in practice. However, if the number of mixture components or the 
number of splits $S$ gets large this approach may become prohibitive. As we have to calculate the product of $S$ mixtures, 
the resulting calculation of the product of distributions has a complexity of $\OO(S^k )$, where $k$ denotes the number of mixture components. %
We therefore refrain from this approach. 

\begin{algorithm}[t!]
   \caption{Distributed model evidence computation using normal approximations}
   \label{alg:approxmethod}
\begin{algorithmic}
   \STATE {\bfseries Input:} data $y$, number of chunks $S$, likelihood $p(\cdot|\theta)$, prior $p(\theta)$.
   \STATE {\bfseries Split}: Divide data in $S$ chunks $y_1, \cdots, y_S$.
   \STATE {\bfseries Apply} in parallel:
   \FOR{$s=1$ {\bfseries to} $S$}
   \STATE Sample $\theta_s^1, \cdots, \theta_s^N \sim \tilde{p}(\theta|y_s)$.
   \STATE Calculate and store $\tilde{p}(y_s) = \int p(y_s|\theta) \tilde{p}(\theta) d \theta$ and $\mu_s, \Sigma_s$.
   \ENDFOR
   \STATE {\bfseries Combine:}
   \STATE Calculate $\alpha^S \prod_{s=1}^S \tilde{p}(y_s) \int  \prod_{s=1}^S \mathcal{N}(\theta| \mu_s, \Sigma_s) d \theta  \approx p(y)$.
\end{algorithmic}
\end{algorithm}

\paragraph{Error of the approximation}
Proposition \ref{prop:isub_normal} gives an exact representation of the relative error when using the normal approximation to the subposterior integral $I_{\text{sub}}$ \eqref{eq:gaussianisub}.
\begin{proposition}\label{prop:isub_normal}
Suppose $\{ \mu_{s}, \Sigma_{s} \}_{s=1}^{S}$ are the selected parameters for the subposterior normal approximations $\tilde{p}(\theta | y_{s})\approx \mathcal{N}(\theta | \mu_{s}, \Sigma_{s})$, and that $\mathcal{N}(\theta | \mu, \Sigma) \propto \prod_{s=1}^{S} \mathcal{N}(\theta | \mu_{s}, \Sigma_{s})$ is the resulting normal approximation to the full posterior $p(\theta | y)$. Then the proposed normal approximation to $I_{\rm{sub}}$ satisfies:
\begin{align*}
      \int_{\Theta} \prod_{s=1}^{S}\widetilde{p}({\theta}\mid {y}_{s}) \ d\theta &= \left\lbrace  {{\int_{\Theta}  \prod_{s=1}^{S}\mathcal{N}({\theta}\mid {{\mu}}_{s}, {{\Sigma}}_{s}) \ d{\theta}}} \right\rbrace {{\mathbb{E}_{\mathcal{N}({\theta}; {{\mu}}, {{\Sigma}})}\left[ \prod_{s=1}^{S}\dfrac{\widetilde{p}({\theta} \mid {y}_{s})}{\mathcal{N}({\theta}| {{\mu}}_{s}, {{\Sigma}}_{s})} \right]}}.
    \end{align*}
\end{proposition}
\begin{proof}
See Appendix.
\end{proof}
 Proposition \ref{prop:log_error} provides an error bound for the proposed approximation to the marginal likelihood under assumptions on the quality of the subposterior normal approximations. 
\begin{proposition}\label{prop:log_error}
Suppose $\{ \mu_{s}, \Sigma_{s} \}_{s=1}^{S}$ are the selected parameters for the subposterior normal approximations $\tilde{p}(\theta | y_{s})\approx \mathcal{N}(\theta | \mu_{s}, \Sigma_{s})$, and that $\mathcal{N}(\theta | \mu, \Sigma) \propto \prod_{s=1}^{S} \mathcal{N}(\theta | \mu_{s}, \Sigma_{s})$ is the resulting normal approximation to the full posterior $p(\theta | y)$. The exact marginal likelihood $p(y)$ and the proposed approximation to the marginal likelihood $\widehat{p}(y)$ are respectively,
   \begin{align*}
          {p}({y}) &= {{\alpha^{S}}}{{\left(\prod_{s=1}^{S}\widetilde{p}({y}_{s})\right)}}{{\int_{\Theta} \prod_{s=1}^{S}{\tilde{p}({\theta}\mid y_{s})} \ d{\theta}}},  \\
       \widehat{p}({y}) &= {{\alpha^{S}}}{{\left(\prod_{s=1}^{S}\widetilde{p}({y}_{s})\right)}}{{\int_{\Theta} \prod_{s=1}^{S}{\mathcal{N}({\theta}\mid {\mu}_{s}, {\Sigma}_{s})} \ d{\theta}}}. 
    \end{align*}
    Suppose that the subposterior normal approximations satisfy the density ratio bounds
        \begin{align*}
            \max_{s=1, \ldots, S} \ \sup_{{\theta} \in {\Theta}} \dfrac{\widetilde{p}({\theta} \mid {y}_{s})}{\mathcal{N}({\theta}; {\mu}_{s}, {\Sigma}_{s})}  \le A, \qquad      \max_{s=1, \ldots, S} \ \sup_{{\theta} \in {\Theta}} \dfrac{\mathcal{N}({\theta}; {\mu}_{s}, {\Sigma}_{s})}{\widetilde{p}({\theta} \mid {y}_{s})} \le B.
        \end{align*}
Then the relative error of the proposed approximation to the marginal likelihood satisfies
\begin{align*}
            -S \log B + \log \mathbb{E}_{\mathcal{N}({\theta}; {\mu}, {\Sigma})}[\mathbbm{1}({\theta} \in \Theta)] \le \log \dfrac{p({y})}{\widehat{p}({y})} \le S \log A.
\end{align*}
\end{proposition}
\begin{proof}
See Appendix.
\end{proof}
If the parameter space is unconstrained, so $\Theta = \mathbb{R}^{\text{dim}(\theta)}$, then $ \mathbb{E}_{\mathcal{N}({\theta}; {\mu}, {\Sigma})}[\mathbbm{1}({\theta} \in \Theta)]=1$ and the bound can be simplified to   $-S \log B  \le \log {p({y})}/{\widehat{p}({y})} \le S \log A$. The supremum of the density ratio is an interesting divergence measure for two probability distributions, with important connections to the total variation distance \citep{dumbgen_2021_bounding}. Under the conditions of the Bernstein-von-Mises theorem, the quality of each subposterior normal approximation is expected to improve as the shard size increases. With a fixed number of shards $S$,  $A$ and $B$ will tend to $1$ as $n$ increases if $ \{\mu_{s}, \Sigma_{s}\}_{s=1}^{S}$ are taken to be the true subposterior means and variances. Corollary \ref{cor:growth} provides an error bound under an assumption about the average accuracy of the subposterior normal approximations over the subsets $S$, rather than the worst case accuracy over the subsets $S$, as in Proposition \ref{prop:log_error}. 
\begin{corollary}\label{cor:growth}
Fix the shard size $n_{s}$, and let the total sample size be given by $n=Sn_{s}$. Let $p(y)$ represent the true marginal likelihood and $\widehat{p}(y)$ denote the proposed approximation as in Proposition \ref{prop:log_error}.  Assume that 
\begin{align*}
    \sup_{{\theta} \in {\Theta}}  \left\lvert \dfrac{1}{S}\sum_{s=1}^{S}\log \dfrac{\widetilde{p}(\theta | y_{s})}{\mathcal{N}(\theta | \mu_{s}, \Sigma_{s})} \right\rvert &= \OO_{p}(1)
\end{align*}
as the number of shards $S$ increases, where $\{y_{s}, \mu_{s},\Sigma_{s}\}_{s=1}^{S}$ are treated as random variables. Then the relative error satisfies
\begin{align*}
    \log \dfrac{p({y})}{\widehat{p}({y})} &= \OO_{p}(S).
\end{align*}
\end{corollary}
\begin{proof}
See Appendix.
\end{proof}

Proposition \ref{prop:log_error} and Corollary \ref{cor:growth} suggest a trade-off between the computational gains from the proposed distributed strategy via the number of splits $S$ and the resulting approximation error.

In practice we transform \eqref{eq:evidencedecomp} to the log domain and get
$$
\log \widehat{p}(y) = S \log \alpha + \sum_{s=1}^S \log \tilde{p}(y_s) + \log \int \prod_{s=1}^S \mathcal{N}(\theta \mid \mu_{s}, \Sigma_{s}) d \theta.
$$
The corresponding estimator is a sum of $S+2$ terms so the variance of this quantity grows linearly in $S$. 
Moreover, when estimating $\log \tilde{p}(y_s)$ by $N$ samples from a Markov chain the MSE is typically of order $\mathcal{O}(1/N)$. 
Thus, more simulations can reduce this error. 
In summary, our results suggest that if the error that comes from MCMC sampling is relatively small and that the shard sizes are large enough so that the quality of the subposterior normal approximation is reasonable, our suggested approach will result in good approximations of the full data set marginal likelihood.

\subsection{Model choice using reversible jump} 
Calculation of the marginal likelihood for every model in the candidate set is only feasible when there are a small number of models under consideration. In many practical settings, the aim is often to choose among a large number of models, that potentially have different support. An important example is variable selection, where a particular model consists of a specific combination of selected variables. 

The reversible jump approach \citep{green1995reversible} allows the construction of a Markov Chain that jointly traverses models and the associated parameter spaces. The posterior probability of a model given the data $p(m_{i}|y)$, see \eqref{eq:postprobmodel}, is obtained by the relative time the sampler 
spends exploring that model. 

\paragraph{Distributed RJMCMC}
The modified product-of-experts belief aggregation rule for models \eqref{eq:poe_models_adjusted}
\begin{align*} 
    p(m_{i} | y) \propto \left\lbrace \prod_{s=1}^{S}\tilde{p}(m_{i} | y_{s}) \right\rbrace \alpha_{i}^{S}\left( \int_\Theta \prod_{s=1}^S \tilde{p}(\theta|y_s, m_{i})\  d \theta\right)
\end{align*}
illuminates how one may conduct distributed Bayesian model selection using reversible jump methodology. During the apply stage, it is sufficient for workers to compute subposterior model probabilities $\widetilde{p}(m_{i} | y_{s}) \propto \tilde{p}(y_{s} | m_{i})p(m_{i})^{1/S}$, and model subposteriors $\tilde{p}(\theta | y_{s}, m_{i}) \propto p(y_{s} | \theta, m_{i})p(\theta |m_{i})^{1/S}$. The subposterior integral $ \int_\Theta \prod_{s=1}^S \tilde{p}(\theta|y_s, m_{i})\  d \theta$ can be evaluated in the combine stage, and the term $\alpha_{i}^{S}$ can be  determined in the split stage. The belief aggregation rule \eqref{eq:poe_models_adjusted} can then be used to reconstruct the full posterior model probabilities.   Once again, a normal approximation to each subposterior $\tilde{p}(\theta|y_s, m_i) \approx \mathcal{N}(\theta | \mu_{s}^{i}, \Sigma_{s}^{i})$  can be used to approximate the integral $$ \int_\Theta \prod_{s=1}^S \tilde{p}(\theta|y_s, m_{i})\  d \theta \approx \int_\Theta \prod_{s=1}^S  \mathcal{N}(\theta | \mu_{s}^{i}, \Sigma_{s}^{i}) d \theta.$$ 

Reversible jump is particularly well suited to to the decomposition \eqref{eq:poe_models_adjusted} as the sampler simultaneously explores models and the associated parameter spaces. As such, estimates of $\tilde{p}(m_{i}|y_{s}), \mu_{s}^{i}$ and $\Sigma_{s}^{i}$ will be readily available from the RJMCMC output of each worker. Consequently, the splitting approach can effectively be combined with a reversible jump algorithm as we suggest in Algorithm \ref{alg:rjmcmcmethod}. See also the 
appendix for more details. 

\begin{algorithm}[ht]
    \caption{RJMCMC distributed model choice}
    \label{alg:rjmcmcmethod}
 \begin{algorithmic}
    \STATE {\bfseries Input:} data $y$, number of chunks $S$, set of models $m_k$ for $k=1,\cdots,K$, likelihoods $p(\cdot|\theta, m_k)$, priors $p(\theta| m_k), p(m_k)$.
    \STATE {\bfseries Split}: Divide data in $S$ chunks $y_1, \cdots, y_S$.
    \STATE {\bfseries Apply} in parallel:
    \FOR{$s=1$ {\bfseries to} $S$}
    \STATE Sample $\theta_s^1, \cdots, \theta_s^N|m_k \sim \tilde{p}(\theta| y_s, m_k)$ using RJMCMC over all $k$. 
    \STATE Calculate and store $\tilde{p}(m_k|y_s)$, $\mu_{s}^{k},\Sigma_{s}^{k}$ for models $k=1, \ldots, K$.
    \ENDFOR
    \STATE {\bfseries Combine:}
    \STATE Calculate $\widehat{p}(m_{k} | y) \propto \left\lbrace \prod_{s=1}^{S}\tilde{p}(m_{k} | y_{s}) \right\rbrace \alpha_{k}^{S}\left( \int_\Theta \prod_{s=1}^S \mathcal{N}(\theta| \mu_{s}^{k}, \Sigma_{s}^{k})\  d \theta\right)$. \\
    Bayes factors can be estimated as $\widehat{p}(m_{k}|y)p(m_{k'})/\{\widehat{p}(m_{k'}|y)p(m_{k})\}$ for any two models $(k,k')$.
 \end{algorithmic}
 \end{algorithm}

There are a number of known issues with RJMCMC that may limit its effectiveness within a divide-and-conquer approach. Reversible jump samplers are known to be hard to tune and often slow to converge. The construction of suitable proposal distributions can also be very challenging when models are not nested. Moreover, the models of interest have to have been visited a sufficient number of times to get reliable estimates. All these combined make the use of RJMCMC burdensome. With regards to distributed computation, an issue is that if the space of potential models is large, not all models of interest might have been explored on every data shard.

\section{Experiments} \label{sec:experiments} 
In our experiment section we investigate the following questions. 
(a) Are naive voting strategies, upsampled likelihoods and CMC based importance sampling an alternative to our approach?
(b) How does the approach based on data augmentation and the approach based on normal approximations perform relatively? 
(c) What is the magnitude of the error introduced from approximating $I_{\text{sub}}$ compared to the overall scale of the log marginal likelihood?
(d) How does the error of $I_{\text{sub}}$ behave with respect to the number of splits?
(e) What are practical gains from using the distributed approach on very large data sets?
(f) How reliable is the distributed approach for RJMCMC?
We answer these questions by assessing our approach with a Bayesian logistic regression, a normal linear regression and a linear regression with Laplace priors. 

Our overall experimental set up is the following: 
prior variances on the model parameters are set to $1$ and their means to $0$. We use a randomised splitting procedure. This means for $S$ splits we will have roughly $n/S$ samples per split, 
where we use a uniform sampling scheme without replacement, if not otherwise stated. We run every sampler $20$ times where at every 
iteration the splitting and the Monte Carlo sampler are initialised with a different seed. Thus, the observed variation in the outcome is a 
combination of the variation through splitting (i.e. different partitions) and Monte Carlo sampling. 
The number of generated MCMC samples per chain is $10,000$ where the first $2,000$ samples are discarded as burn-in. 
We assess the error of the estimation of the log marginal likelihood by comparing it to 
the result of the estimation on the whole data set, if this is computationally feasible in a reasonable amount of time. 

We use the RMSE (root mean squared error), defined as $$\sqrt{\text{MSE}} = \sqrt{ \mathbf{E}\| \log p(y) - \log \hat{p}(y)\|^2},$$ where $\hat{p}(y)$
is approximated using our suggested decomposition. The relative RMSE is defined as 
$\%\sqrt{\text{MSE}} = \frac{ \sqrt{ \mathbf{E}\| \log p(y) - \log \hat{p}(y)\|^2}}{\log p(y)} \times 100$. 
The use of the relative RMSE is justified through the scale of the actual quantity we are trying to estimate. 

\paragraph{Practical considerations}
We implement our algorithm using two generic \textit{R} packages, namely \textit{rstan} \citep{rstan2017} and 
the \textit{bridgesampling} package \citep{gronau2017bridgesampling}. 
\textit{rstan} allows convenient sampling from the posterior distribution of a model using HMC. 
The package can handle a variety of different models and the user has to provide only a simple code that describes the model. 
Tuning of hyper parameters and convergence checking is handled automatically. The \textit{bridgesampling} package can use \textit{rstan} 
models to calculate an approximation of the model evidence. 
For the reversible jump illustration we use the \textit{R2BGLiMS} package\footnote{https://github.com/pjnewcombe/R2BGLiMS}. 
This package performs model choice using reversible jump MCMC on logistic, normal and Weibull regression models. 
We explicitly run the experiment on single core architectures to illustrate the advantages of distributing the computation over a large number of small workers. 
Code for reproducing the results is available through the first author's github repository\footnote{https://github.com/alexanderbuchholz/distbayesianmc}.

\begin{figure}
    \begin{center}
    \begin{subfigure}[b]{0.49\textwidth}
\centerline{\includegraphics[width=0.8\columnwidth]{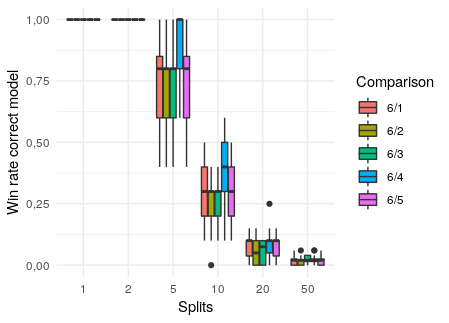}}
\caption{\label{fig:naive_local_bf} Majority based voting using local marginal likelihoods}
\end{subfigure}
\begin{subfigure}[b]{0.49\textwidth}
    \centerline{\includegraphics[width=0.8\columnwidth]{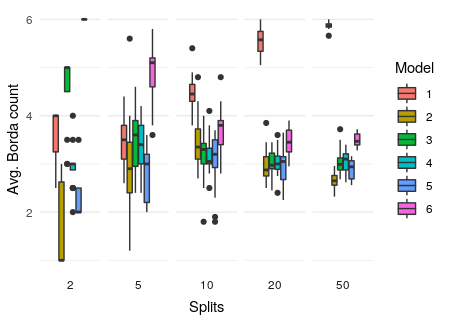}}
    \caption{\label{fig:borda_count} Borda counts using local marginal likelihoods}
\end{subfigure}
\caption{Voting schemes based on local marginal likelihoods. The local marginal likelihoods $\tilde{p}(y_s)$ are computed on all subsets $s$ for $6$ different models. In Figure \ref{fig:naive_local_bf} the local marginal likelihood of the correct model $6$ is compared with the other local marginal likelihoods. Based on a majority vote it is decided whether model $6$ wins. We indicate the win rate of model $6$, shown on the y-axis. In Figure \ref{fig:borda_count} we use a Borda voting scheme to aggregate the order of local marginal likelihoods into a global model ordering. We display the average Borda count and hence the correct model 6 must get the highest average count to win. 
As the number of splits increase (x-axis) the correct model 6 is not chosen anymore neither for the majority voting scheme nor for the Borda count. %
}

\label{fig:voting_schemes}
\end{center}
\vskip -0.3in
\end{figure}

\begin{figure}
    \begin{center}
    \begin{subfigure}[b]{0.49\textwidth}
        \includegraphics[width=0.8\columnwidth]{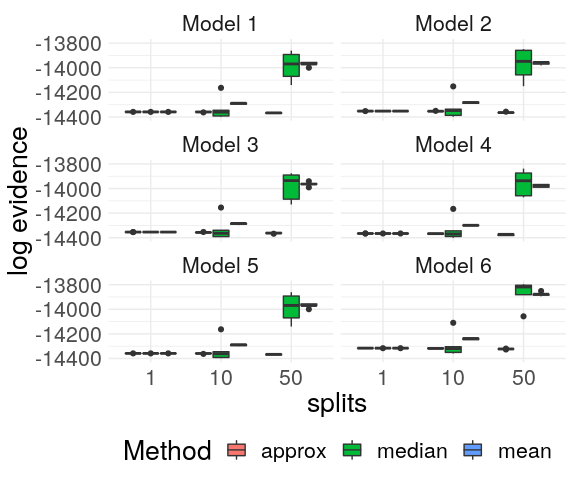}
        \caption{Comparison of three different methods for computing the marginal likelihood (y-axis). The boxes inside a number of splits (x-axis) are ordered as left our method (approx), middle the median and right the mean of the upsampled marginal likelihood.}
        \label{fig:linear_model_different_methods}
    \end{subfigure}
    \begin{subfigure}[b]{0.49\textwidth}
        \includegraphics[width=0.8\columnwidth]{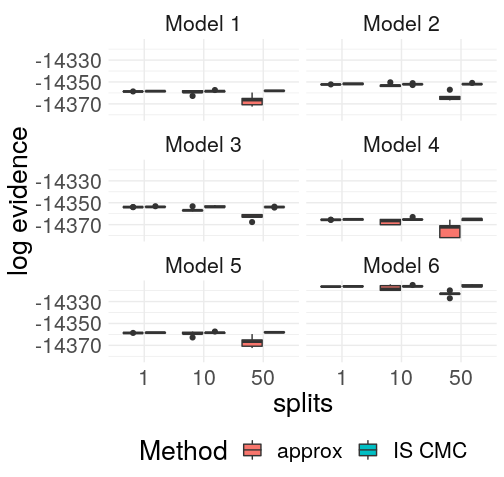}
        \caption{Comparison of the consensus Monte Carlo IS (CMC IS, right most boxes) approach for computing the marginal likelihood (y-axis) with our approximate method (approx, left most boxes) for a varying number of splits (x-axis).}
        \label{fig:linear_model_comparison_cmc}
    \end{subfigure}
       \caption{In Figure \ref{fig:linear_model_different_methods} (left) we compare the upsampled methods \{mean, median\} with our approximate method (approx) based on Algorithm \ref{alg:approxmethod}. The mean and median based approaches become unstable with 50 splits of the data. 
       In Figure \ref{fig:linear_model_comparison_cmc} (right) CMC IS stays competitive as the number of splits increase and seems more stable than our approximate method (approx) based on Algorithm \ref{alg:approxmethod}. However, an additional round of communication between the central node and the distributed workers is required. %
       }
       \label{fig:three graphs}
\end{center}
       \vskip -0.3in
\end{figure}

\subsection{Experiment 1: Voting schemes based on local marginal likelihoods and CMC IS} \label{ssec:voting_schemes}
In this section we illustrate potential issues of natural alternatives to our approach based on the decomposition in \eqref{eq:evidencedecomp}. Although this comparison is limited in scope, we believe it motivates and justifies the in-depth study of Algorithm \ref{alg:approxmethod}. 
This experiment is based on six different Gaussian regression models with a log normal prior on the variance that makes this model non-conjugate. We simulate a dataset with $10,000$ observations and induce high correlation of the features. The correct model (6) uses the all 17 covariates whereas the other models all omit one relevant variable (see also the Appendix for further details).

\paragraph{Majority voting and Borda counts}
To motivate the need for a coherent way of combining local evidences, we illustrate what can go wrong when using a naive approach for combining local inference. We try to identify the best model based on the local marginal likelihoods $\tilde{p}(y_s|m_k)$ for the models $m_k$.
We compare two different voting schemes that aggregate local rankings. First, we use a majority voting scheme, where each worker returns the winning model based on local marginal likelihoods, i.e., $\text{winner}_s = \arg \max_{m_k} \tilde{p}(y_s | m_k) \forall s$. Then, the central node aggregates the vote share of model 6 (i.e., on how many workers the correct model won): $\text{win rate} = \sum_{s=1}^S {1}_{\{ \text{winner}_s = 6 \} } / S $.
As a second voting scheme we use Borda counts \citep{bordacount}. The local models receive a score based on their ranking going from 1 to 6, i.e., $\text{Borda score}_{s, m_k} = \arg \text{sort}_{m_k} \tilde{p}(y_s | m_k)$, and the winning model is obtained by averaging the Borda scores $ \sum_{s=1}^S \text{Borda score}_{s, m_k} / S $. This voting scheme favors a consensus vote over a majority based decision. 

As illustrated in Figure \ref{fig:voting_schemes}, the voting schemes work reasonably well for a small number of splits. But with as little as $10$ splits (every worker sees $10\%$ of the data), the correct model is not chosen anymore, neither for the majority based voting nor for the Borda counts. Therefore, we decide not to compare our suggested method with these naive aggregations in the remainder of our experiments.  

\paragraph{Mean and median of upsampled normalising constants}
We use the same setting as above and compare two additional methods with our suggested approach from Algorithm \ref{alg:approxmethod}. 
We estimate locally the marginal likelihood of what we call an upsampled model following the idea in \cite{zhang2018robust}, i.e., we replicate locally the data of each shard to match the full data set size. The upsampled marginal likelihood is computed using the replicated data as
$$
\acute{p}(y_s) = \int_{\Theta}  p(y_s | \theta)^S p(\theta) d \theta.
$$
A final estimator of the marginal likelihood is obtained by computing the mean or median 
\begin{eqnarray*}
    \widehat{p}(y) = \begin{cases}
        \text{mean}(\{\acute{p}(y_1), \dots, \acute{p}(y_S) \}), \\
        \text{med}(\{\acute{p}(y_1), \dots, \acute{p}(y_S) \}),
    \end{cases}
\end{eqnarray*}
of the local marginal likelihoods when combining them on the central node. 
We illustrate the results of this approach in Figure \ref{fig:linear_model_different_methods}. Using medians or means of upsampled local marginal likelihoods becomes unstable when reaching $50$ shards compared to our approach in Algorithm \ref{alg:approxmethod}. We therefore do not consider this method in the rest of our experiments. 

\paragraph{Consensus Monte Carlo based importance sampling}
As a final point of comparison, we suggest a straightforward extension of CMC to the computation of the normalising constant based on importance sampling using the consensus normal distribution. This approach is based on the following importance sampling identity that can also be derived from Proposition \ref{prop:isub_normal}.
\begin{eqnarray}\label{eq:cmc_is}
 p(y) %
 = \int_{\Theta} \prod_{s=1}^S \frac{ p(y_s | \theta) p(\theta)^{1/S}}{ \mathcal{N}( \theta| \mu_s, \Sigma_s) } \prod_{s=1}^S \mathcal{N}( \theta| \mu_s, \Sigma_s)  d \theta \approx \frac{1}{N} \sum_{i=1}^N \prod_{s=1}^S \underbrace{\frac{ p(y_s | \theta_i) p(\theta_i)^{1/S}}{ \mathcal{N}( \theta_i| \mu_s, \Sigma_s) } }_{=: w_s(\theta_i)} \times \gamma,
\end{eqnarray}
where we sample $\theta_i \sim \mathcal{N}(\theta | {\mu}, {\Sigma}) \propto \prod_{s=1}^S \mathcal{N}( \theta_i| \mu_s, \Sigma_s)$ for $N$ samples and 
$$\gamma = \int_\Theta  \prod_{s=1}^S \mathcal{N}( \theta| \mu_s, \Sigma_s)d \theta,$$ 
correctly normalizes the importance sampling estimator in \eqref{eq:cmc_is}.  
Thus, we construct an approximation of the whole posterior but we only need to evaluate the ratio of the local subposteriors and the local normal approximation $w_s(\theta_i)$. This approach adds an additional round of communication with the central node, as $\theta_i \sim \mathcal{N}(\theta | {\mu}, {\Sigma})$ is generated on the central node, the weights $w_s(\theta_i)$ are computed locally and then send back to the central node to compute the aggregation. This IS approximation is exact, but can suffer from high variance if the local normal approximation $\mathcal{N}( \theta| \mu_s, \Sigma_s)$ and the consensus approximation $\mathcal{N}(\theta | {\mu}, {\Sigma})$ have little overlap. The product of importance weighting factors $ \prod_{s=1}^S w_s(\theta_i)$ potentially introduces further variance as the number of splits $S$ increase (although we did not observe this empirically in Figure \ref{fig:linear_model_comparison_cmc}). 

We illustrate the result of this method in Figure \ref{fig:linear_model_comparison_cmc}. Using an IS approximation based on CMC with 1000 importance samples performs reasonably well in this setting. A more detailed comparison of the CMC IS approximation for computing the marginal likelihood is out of scope for this work. 

\subsection{Experiment 2: Comparison of the conditional and approximate method} 
In this experiment we are interested in predicting on-time arrival of air planes where we use a Bayesian logistic regression model with $17$ 
features (see the appendix for more details and a comparison with another model). 
There are in total $n=327,346$ observations. Figure \ref{fig:flightsdata_m1} shows the results of our simulations. 
The reference value for the model evidence has been obtained using importance sampling using a Laplace approximation based on the \textit{maximum a posteriori}. %
In the current setting the bias is less than $-0.5\%$, even for as many as $50$ splits, both for the conditional approach that uses data augmentation (Algorithm \ref{alg:exactmethod}) and the approximate method based on normal approximations (Algorithm \ref{alg:approxmethod}).  
We notice a strong downward bias for the conditional approach, as predicted by Proposition \ref{thm:propisubvar}. 
This clearly illustrates a weakness of the conditionally conjugate approach and we observed this behaviour on different data sets (not shown here).

\begin{figure}
    \begin{center}
    \centerline{\includegraphics[width=0.6\columnwidth]{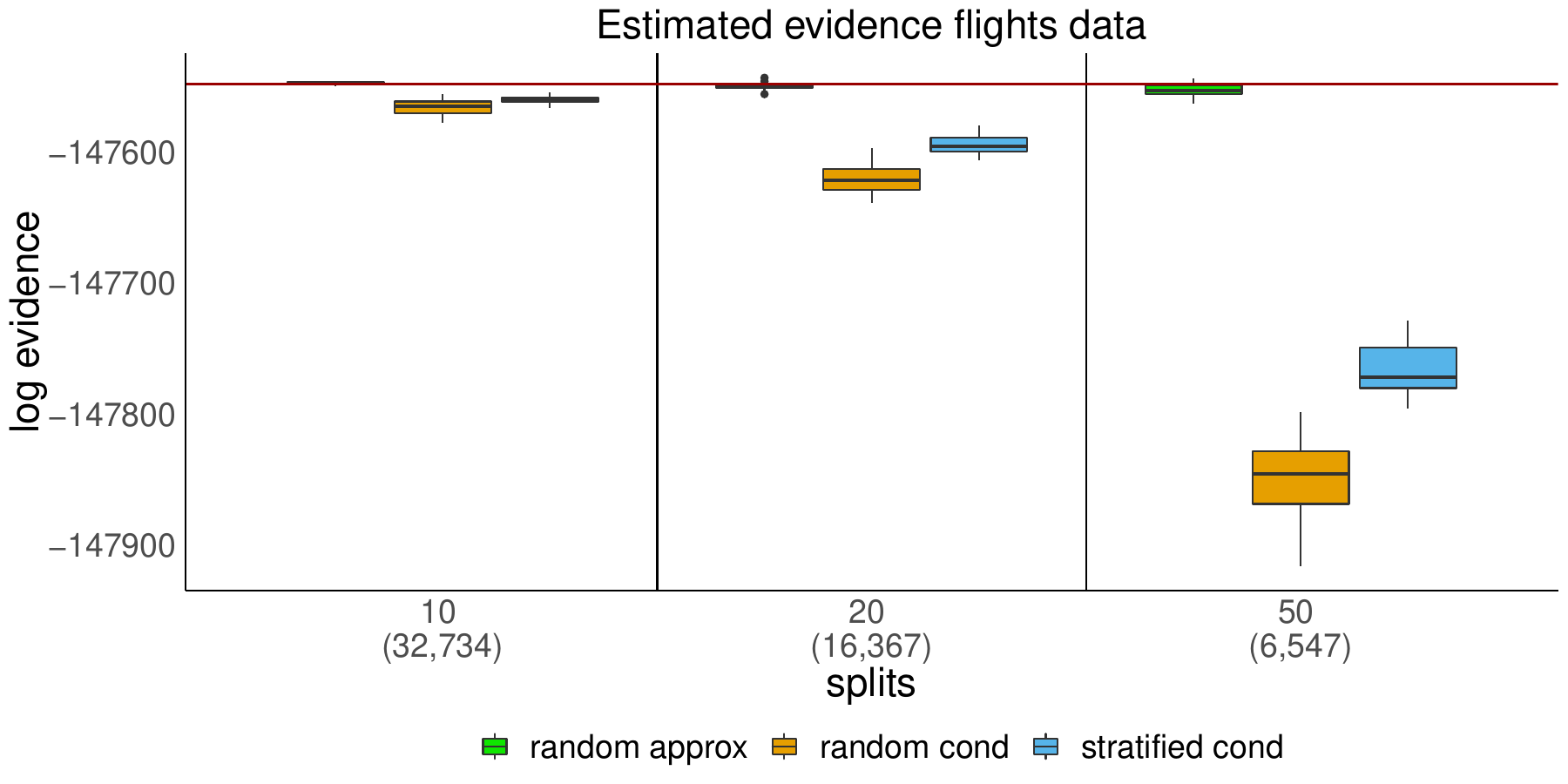}}
    \caption{Comparison of the calculated normalising constant (y-axis) for a logistic regression on the flights data. 
    As the number of splits increase (x-axis), the estimates become unreliable for the conditionally conjugate approach in Algorithm \ref{alg:exactmethod} (denoted by random cond, middle boxes), 
    even when using stratification (stratified cond, right boxes) for the sampling, whilst the approximate method (random approx, left boxes, Algorithm \ref{alg:approxmethod}) is stable and accurate. 
    The average number of observations per split is indicated in parentheses. The reference value is indicated as horizontal line. }
    \label{fig:flightsdata_m1}
    \end{center}
    \vskip -0.3in
    \end{figure}

A remedy for the high variance is the use of stratification to construct more homogeneous data shards to improve the performance of the conditional approach. 
We performed k-means clustering of the features with 10 clusters using the full data set. Then we stratify the observations using the outcome and the cluster membership. A similar approach was used in \citet{zhao2014accelerating} to diversify sampling for mini batches in stochastic gradient optimisation. The motivation is to obtain representative samples of the entire data set with every cluster being represented. 
Although often feasible in practice, this approach goes against the idea of distributed computation as all data has to be seen at once to construct a stratification. The improvement for the conditional approach that comes from stratification is rather limited, as shown in Figure \ref{fig:flightsdata_m1}. In the remaining experiments we consider only the approximate method described in Section \ref{ssec:approximateapproach} and no stratification. 

\subsection{Experiment 3: Assessment of the approximate method in a toy example} \label{ssec:toy_model}
In this experiment we investigate the behaviour of Algorithm \ref{alg:approxmethod} in the same Gaussian toy model as in the first experiment from Section \ref{ssec:voting_schemes}. We compare the performance for estimating the marginal likelihood over up to 
$50$ splits of $10,000$ observations. The error introduced via the approximation in \eqref{eq:gaussianisub} of the subposterior amounts to less 
than $0.02 \%$ (see Appendix), 
and despite a small downward bias in estimating $\log p(y)$, the resulting BFs stay stable over decreasing subset sizes (Figure \ref{fig:BF_gaussian}), 
resulting in a consistent choice of the correct model $(6)$. See also in the appendix for the 
comparison of log marginal likelihoods, that draws a similar picture. 

\begin{figure}
\begin{center}
\centerline{\includegraphics[width=0.8\columnwidth]{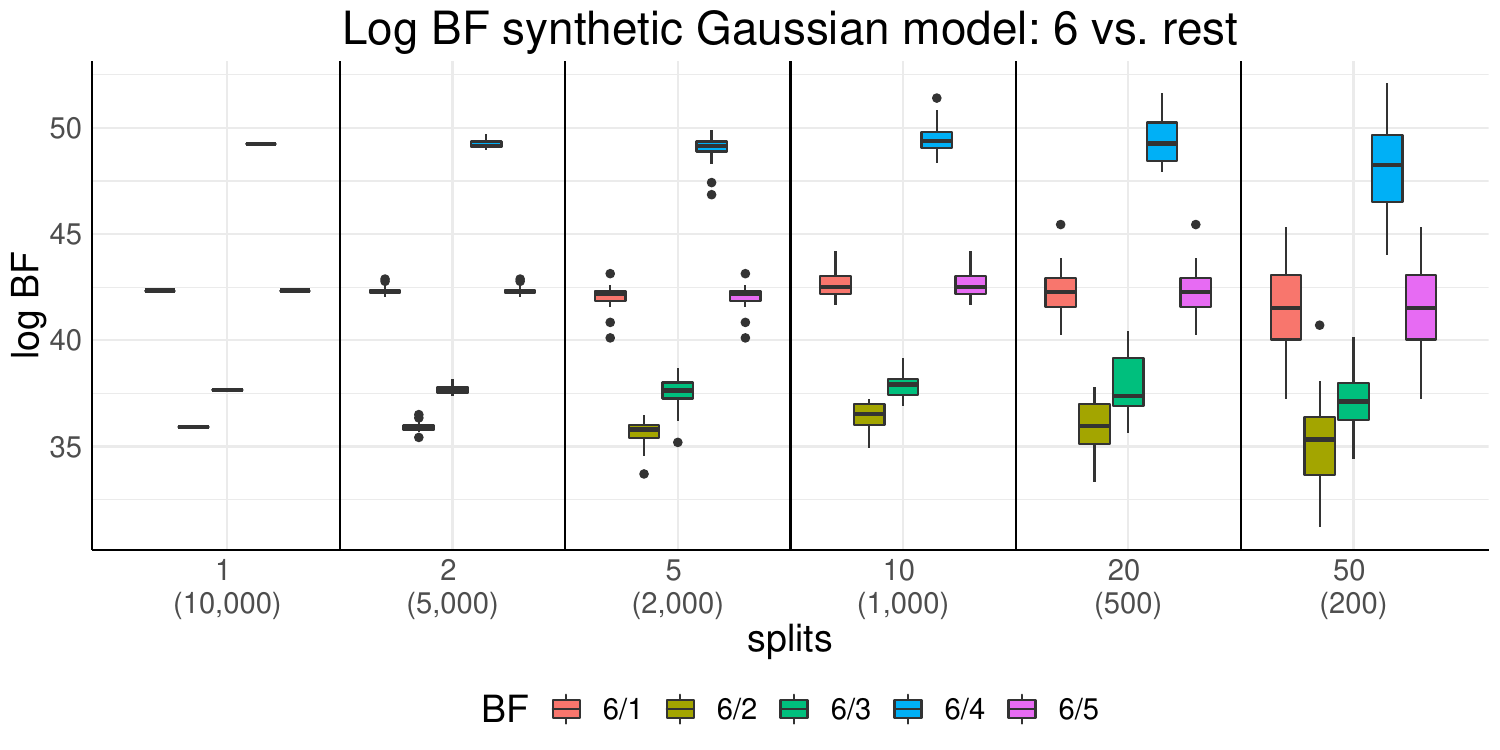}}
\caption{log Bayes factor computed using Algorithm \ref{alg:approxmethod} (y-axis) of the correct model $6$ against all other models. The BFs stay roughly constant even as the number of splits increase. The average number of observations per split is indicated in parentheses (x-axis). The value for a single split (left most column) serves as reference value. The ordering of the Bayes factors does not change as the number of splits increase and hence consistent model choice is possible over an increasing number of splits. }
\label{fig:BF_gaussian}
\end{center}
\vskip -0.3in
\end{figure}

\subsection{Experiment 4: The approximate method on a very large data set}
This experiment is based on a very large data set from particle physics where a binary classification problem consists in predicting 
the presence of a Higgs boson. The data set contains $11$ million observations. Our aim is to understand whether model (1) with 21 low-level 
features or model (2) with 7 high-level features is more likely \textit{a posteriori}. 
Running a full Monte Carlo simulation on the whole data set for model (1) leads to excessive computation times: 
a full run would take more than 450 hours (almost 3 weeks) on a single core CPU. We assess how the model evidence changes as splits get small 
by dividing the data set in shards of $1 \%, 0.2 \%$ and $0.1 \%$. 
Thereby we bring the computation time down to less than $5$ hours, $1$ hour and less than $30$ minutes, 
all running on different single core CPUs. The combination of the results of the different workers is in the 
order of a few seconds as we have to perform $\mathcal{O}(S p^3)$ operations to calculate $I_{\text{sub}}$. 
(Although necessary matrix inversion can be pre-computed locally before sending them to the central node.)
We show the results of our estimation in Figure \ref{fig:higgs}. Some bias is introduced by the splitting as 
the number of shards grow as illustrates the right hand side of Figure \ref{fig:higgs}. However, the bias is overall small and not visible when comparing both models on the same scale (left side of Figure \ref{fig:higgs}). 
Consequently, we would clearly choose model (1). 

In essence our experiment on the Higgs data set illustrates the necessity for distributed computation in the large data regime. 
Running the same experiment on the entire data set is too slow for most applications. 

\begin{figure}
\begin{center}
\centerline{\includegraphics[width=0.8\columnwidth]{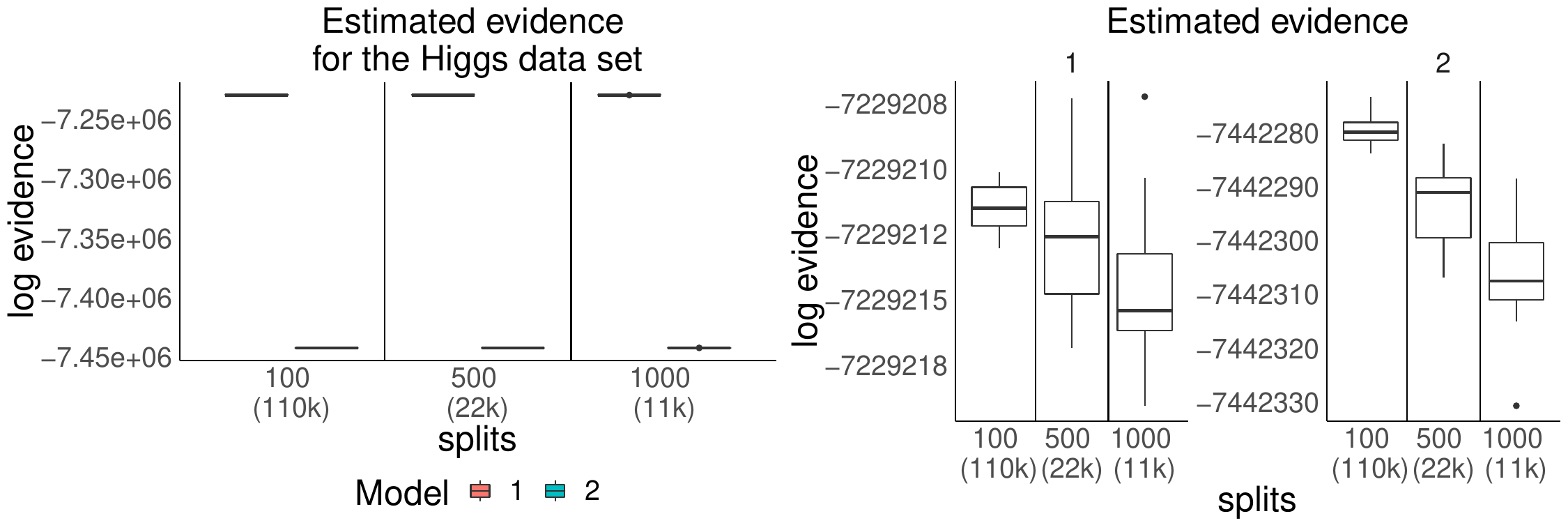}}
\caption{Comparison of the calculated normalising constant using Algorithm \ref{alg:approxmethod} (y-axis) for the Higgs data set. 
The left plot shows the evidence for both models (model 1 left boxes, model 2 right boxes) on the same scale. The middle plot corresponds to model (1), 
the right plot corresponds to model (2), using their respective scales only. The average number of observations per split is given in parentheses (x-axis). We observe a small downward bias as the number of splits increase.}
\label{fig:higgs}
\end{center}
\vskip -0.3in
\end{figure}

\subsection{Experiment 5: Sparse regression on a large genetic data set}
We compare the performance of a linear regression model with a Laplace prior on a real genetic data set from the UK
Biobank database. %
There are $n=132,353$ observations available. We consider model 1 with 50 and model 2 with 100 genetic variants in the human leukocyte antigen (HLA) region of chromosome 6 that are included as features in order to predict mean red cell volume (MCV). 
Due to the Laplace prior the conditionally conjugate approach is not applicable. %
We face again a situation where sampling the posterior on the entire data set on a single core CPU is estimated to take more than 200 hours for model 2. Using 20, 50 and 100 splits brings this computation time down to 10 hours, 2 hours and 1 hour. 

In order to assess the bias properly we decided to run our approximate method on a $10\%$ subset of the data where it is computationally feasible to analyse the whole subset and derive a reference value for the normalising constant. 
We see in Figure \ref{fig:hla} on the left side that a downward bias is present, but that we would choose consistently across subsets the right model for the $10\%$ subset of the full data set. 
When we use the full data set with $20$ to $100$ splits the right model is still chosen 
correctly in a given partition (see the right side of the same figure). We also see that model comparison becomes meaningless if different partitions are used to compare the model evidences.

\begin{figure}
    \vskip -0.1in
    \begin{center}
    \centerline{\includegraphics[width=0.8\columnwidth]{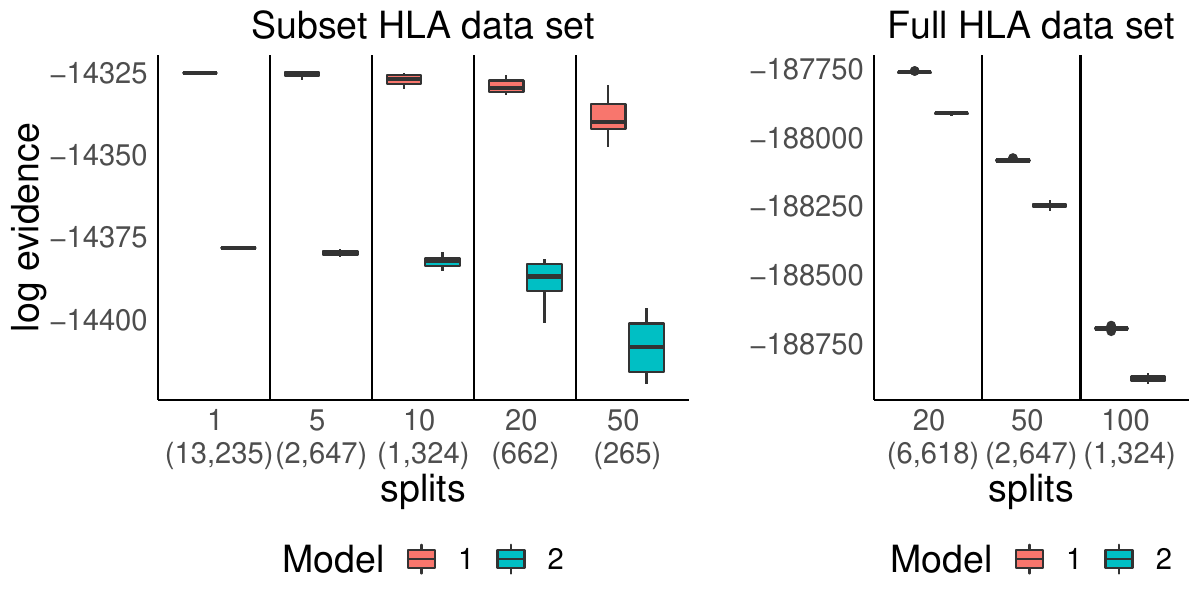}}
    \caption{Comparison of the calculated normalising constant using Algorithm \ref{alg:approxmethod} (y-axis) for a linear regression using a sparsity enforcing prior. Model 1 (left boxes) has 50 features, model 2 (right boxes) has 100 features. Left plot: estimated model evidence for the model run on a $10\%$ subset. Right plot: estimated model evidence using the full data set but starting with 20 splits. The average number of observations per split is given in parentheses (x-axis). We observe a downward bias as the number of splits increase. For the full data set a comparison across a different number of splits would be meaningless. }
    \label{fig:hla}
    \end{center}
    \vskip -0.4in
\end{figure}

As illustrates Table \ref{tab:hla}, the error relative to the true value of the normalising constant ($\% \sqrt{\text{MSE}}$) stays small even when using $50$ splits and thus having only $265$ observations for the estimation of $100$ parameters. As the number of splits increases, the squared bias starts to dominate the error as illustrates the ratio $\text{Bias}^2/\text{Var}$ in Table \ref{tab:hla}. %
The error relative to the level of the quantity that we are trying to estimate stays rather small.

\begin{table}[h]

\begin{center}
\begin{small}
\begin{sc}
\begin{tabular}{lrrrr} 
splits  & 5 & 10 & 20 & 50\\
\toprule
\multicolumn{5}{c}{Model 1 (50 features)} \\
$\sqrt{\text{MSE}}$ &    0.959 &  2.726 &  4.537 & 14.989\\
$\% \sqrt{\text{MSE}}$  & -0.007 & -0.019 & -0.032 & -0.105\\
$\frac{\text{Bias}^2}{\text{Var}}$  &  0.04 &  1.32 &  3.71 &  5.44\\
\midrule
\multicolumn{5}{c}{Model 2 (100 features)} \\
$\sqrt{\text{MSE}}$ & 1.812 &  4.591 & 11.404 & 30.996\\
$\% \sqrt{\text{MSE}}$  & -0.013 & -0.032 & -0.079 & -0.216\\
$\frac{\text{Bias}^2}{\text{Var}}$  &  3.42 &  6.07 &  2.42 & 11.99\\
\bottomrule
\end{tabular}
\end{sc}
\end{small}
\end{center}
\caption{Error metrics of the approximation of the log marginal likelihood for the subset $(10\%)$ of the HLA data set with $13,235$ observations.} \label{tab:hla}
\end{table}

Finally, for this experiment we assess the error of $\int_\Theta \prod_{s=1}^S \tilde{p}(\theta|y_s) d \theta$ as the number of splits grows, but the total data set size of $13,235$ stays fixed. 
We define the following two measures of the error. 

\begin{eqnarray*}
\epsilon_{1,S} = | \log \int_\Theta \prod_{s=1}^S \tilde{p}(\theta|y_s) d \theta - \log \int_\Theta \prod_{s=1}^S \mathcal{N}(\theta|\mu_s, \Sigma_s) d \theta |, \\
\epsilon_{2,S} = \log | \int_\Theta \prod_{s=1}^S \tilde{p}(\theta|y_s) d \theta - \int_\Theta \prod_{s=1}^S \mathcal{N}(\theta|\mu_s, \Sigma_s) d \theta | .
\end{eqnarray*}

The error $\epsilon_{1,S}$ (referred to as relative log error) corresponds to the log of the error predicted by Proposition \ref{prop:log_error} and it is expected to grow linearly in $S$. The error $\epsilon_{2,S}$ (referred to as log error of the difference) is a measure of the absolute error. The left most plot in Figure \ref{fig:error_hla_approx} illustrates that $\epsilon_{1,S}$ seems to grows linearly with $S$. The error $\epsilon_{2,S}$ seems to grow more like $\log S$, 
as depicted in the right-most plot in Figure \ref{fig:error_hla_approx}. 

\begin{figure}
\begin{center}
\centerline{\includegraphics[width=0.7\columnwidth]{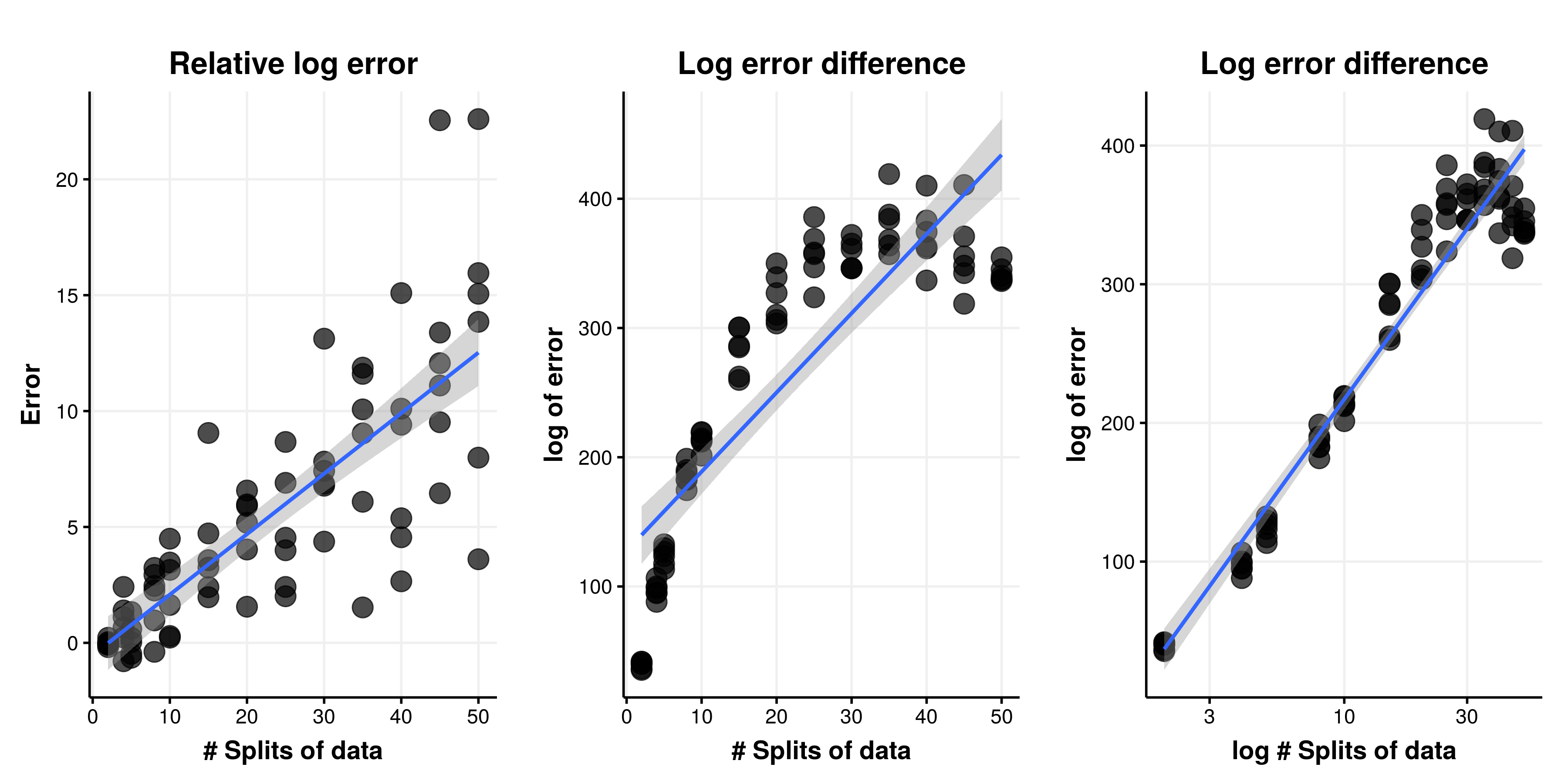}}
\caption{Left plot: error $\epsilon_{1,S}$ as a function of $S$. Middle plot: error $\epsilon_{2,S}$ as a function of $S$. 
Right plot: error $\epsilon_{2,S}$ as a function of $\log S$. 
The line corresponds to a linear regression line fitted to the data in order to illustrate the trend. 
We run every simulation $6$ times in order to get repeated measurements. 
The sampler is run over the range of splits $[1,2,4,5,8,10,15,20,25,30,35,40,45,50]$. The total number of observations is fixed to $13,235$ and 
varies accordingly with the number of splits.}
\label{fig:error_hla_approx}
\end{center}
\vskip -0.3in
\end{figure}

\subsection{Experiment 6: Distributed RJMCMC}
Finally, we investigate the use of our splitting approach for a vanilla model selection in a RJMCMC setting. Our simulation should be seen as a proof of concept as RJMCMC faces numerous issues that make exact inference dependent on various tuning parameters that go beyond the scope of this paper.

For this purpose we simulate a toy data set of size $n=4,000$ with a binary outcome where the five features exhibit a high correlation of $0.9$. 
\begin{table}[h!]
\begin{center}
\begin{footnotesize}
\begin{sc}
\begin{tabular}{lrrrrrr} 
 Variables & 1 & 2 & 3 & 4 & 5 & \\
  \toprule
 Model 1 & \checkmark  & \checkmark & \checkmark & $\times$ & \checkmark & \\
 Model 2 & \checkmark  & $\times$ & $\times$ & \checkmark & \checkmark & \\
 Model 3 & \checkmark  & \checkmark & $\times$ & $\times$ & \checkmark & \\
\midrule
 Truth & \checkmark  & \checkmark & $1/2$ & $\times$ & \checkmark & \\
\bottomrule
\end{tabular}
\end{sc}
\end{footnotesize}
\end{center}
\caption{Active variables in the RJMCMC experiment.} \label{tab:rjmcmc}
\end{table}

The data is generated by mixing two data sets. %
Thus, we artificially generate a setting where it is not clear whether to include the third variable. See Table \ref{tab:rjmcmc} for more details as well as the appendix. 
The comparison of the different models is shown in Figure \ref{fig:rjcmcm}. As the number of splits increase, the estimates of the Bayes factors deteriorate. 
In the current setting going beyond $3$ splits may lead to misleading results due to high variance of the estimates as show our experiment. 
This high variance occurs when combining back the results and is due to several reasons. RJMCMC samplers take a long time to mix and less likely models are potentially not explored enough. Therefore both the estimates based on the MCMC samples of the chain as well as the sojourn times suffer from high variance that accumulates when combining the results from several splits. 
There is no guarantee that on all data shards all models of interest are explored, if the data shards are too small. In this situation it is not possible to combine the results from several shards back together. 

For our experiments we decided to use a medium sized data set and only a small number of features and thereby making exploration of the relevant models more likely. %
Potential remedies for the evoked problems are an improved estimation of the BF using the method presented in \citet{bartolucci2006efficient} or the construction of more homogeneous splits using stratification. In any case, whilst theoretically possible, distributed model evidence will rely on well mixing efficient RJMCMC in each split.

\begin{figure}
    \begin{center}
    \centerline{\includegraphics[width=.7\columnwidth]{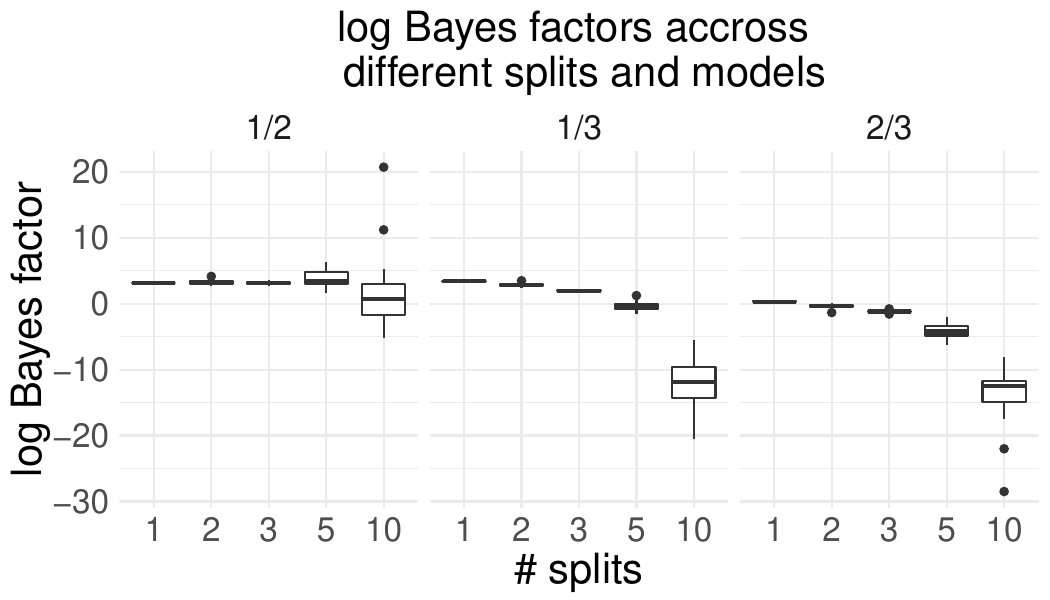}}
    \caption{Comparison of the Bayes factor over several splits for the RJMCMC sampler based on Algorithm \ref{alg:rjmcmcmethod} (y-axis). We compare the BF of model 1 vs 2 (1/2), reference log BF $3.2$, model 1 vs 3 (1/3), reference log BF $3.6$ and model 2 vs 3 (2/3), reference log BF $0.4$ over a changing number of splits (x-axis).}
    \label{fig:rjcmcm}
    \end{center}
    \vskip -0.3in
\end{figure}

\subsection{Guidelines for practical application}
When using our approach in practice we recommend at least a few thousand observations per data shard for a normal approximation to be reasonable. At this stage we suggest to avoid high dimensional settings where the number of parameters exceeds the number of data points, see also \citet{kass1995bayes}, where at least 5 observations per dimension are recommended. If the number of observations per shard are too small, the normal approximation becomes unreliable and one risks to face a large downward bias when combining the results. For complex models the bias grows faster %
as we split the data in smaller shards. 
It can be helpful to evaluate the variance of summary statistics across the shards to detect if the splits are not homogeneous and run the sampler on a different number of splits. 

We recommend sufficiently long Markov chains (see our default settings for the experiments) and the use of convergence diagnostics to make sure that the posterior has been explored sufficiently and that estimated posterior moments are reliable. 
In particular, we often face the challenge to find a balance between 
(a) making sure the approximations are precise enough and (b) limiting computation time. 
In practice, the bias in the estimation is often smaller than the variation in the estimators. %
Thus consistent model choice between competing models is possible. However, care is needed if competing models are similar.%

\section{Discussion and Conclusion} \label{sec:discussion} 
We have presented an approach to calculate the normalising constant in a distributed fashion to enable Bayesian model choice with large data sets. 
We are able to effectively divide the computation time by several orders of magnitude by splitting the data over a large number of workers and limiting communication between workers. 
We have shown overall good numerical results and explained the theoretical underpinning for our approach. There remain open questions. 
Although the estimation of the subposterior normalising constants is biased in general, 
this bias seems worth accepting in practice.  
It would be interesting to link this bias of $I_{\text{sub}}$ to the way the data is split and to the characteristics of the model such as, e.g., its dimension. %

Proposition \ref{prop:isub_normal} suggests possible refinements of the normal approximation to $I_{\text{sub}}$ \eqref{eq:gaussianisub}. In principle, the correction factor 
\begin{align}
  \dfrac{\int_{\Theta} \prod_{s=1}^{S}\widetilde{p}({\theta}\mid {y}_{s}) \ d\theta}{ {{\int_{\Theta}  \prod_{s=1}^{S}\mathcal{N}({\theta}\mid {{\mu}}_{s}, {{\Sigma}}_{s}) \ d{\theta}}}} &=  {{\mathbb{E}_{\mathcal{N}({\theta}| {{\mu}}, {{\Sigma}})}\left[ \prod_{s=1}^{S}\dfrac{\widetilde{p}({\theta} \mid {y}_{s})}{\mathcal{N}({\theta}| {{\mu}}_{s}, {{\Sigma}}_{s})} \right]}} \label{eq:correction}
\end{align}
could be estimated by first approximating the density ratio terms $\widetilde{p}(\theta | y_{s})/\mathcal{N}(\theta | \mu_{s}, \Sigma_{s})$, and then using a simple Monte Carlo average for the expectation over the global normal approximation $\mathcal{N}(\theta | \mu, \Sigma) \propto \prod_{s=1}^{S}\mathcal{N}(\theta | \mu_{s}, \Sigma_{s})$ as we suggest for the IS CMC estimator. Subposterior samples could be used to construct a kernel density estimate of $\widetilde{p}(\theta | y_{s})$ \citep{neiswanger2013asymptotically}. The density ratio could also be estimated using various nonparametric techniques from the machine-learning literature \citep{kanamori_2012_statistical}. For large values of $S$, a Laplace approximation to the expectation could also be accurate enough for practical purposes \citep{tierny_1986_accuracte}.  Estimation of the correction term \eqref{eq:correction} will introduce some additional variance, however this may be compensated for by a reduction in the bias. A detailed investigation of the costs and benefits of these approaches is an avenue for future research. 

If we want to achieve truly parallel Bayesian computation, we must be able to both split the data and run short Markov chains without burn-in bias \citep{jacob2017unbiased}. Based on this idea an unbiased estimation of the normalising constant via bridge sampling \citep{rischard2018unbiased} could be combined with our method to improve scalability. %
Another interesting avenue for future research would be the use of variational inference for distributed Bayesian model choice using the work of \citet{rabinovich2015variational,nowozin2018debiasing}. 
In practical settings shotgun stochastic search (SSS) \citep{hans2007shotgun} could be applicable using our decomposition as SSS relies on normal approximations to the posterior to quickly explore different models. 
We also suggest to generalise our approach to settings where data subsets are not i.i.d. and of varying seize and investigate 
applications to federated learning \citep{li2020federatedlearning} in combination with posterior approximations for neural networks \citep{immer2021scalable}. 
We think that distributed Bayesian computation merits further theoretical and practical investigation as an alternative to the mini batch paradigm.  

\subsubsection*{Acknowledgements}
We thank Leonardo Bottolo, Paul Newcombe, Will Astle and Nicolas Chopin for helpful discussions and Will Astle for providing data. 
We would like to thank the reviewers and the editor for their feedback that greatly helped to improve our work. This work was supported by the EPSRC (EP/R018561/1), an MRC programme grant (MC\_UU\_00002/10) 
and the Alan Turing Institute (TU/B/00092).

\bibliographystyle{ba}
\bibliography{example_paper}

\appendix
\newpage
\clearpage
\section{Additional details on the algorithm}
We present in the following additional details on the algorithms. First we show how to use P\'{o}lya-Gamma data augmentation for the logistic regression 
in a distributed setting. Then we focus on the RJMCMC approach.

\subsection{P\'{o}lya Gamma data augmentation for the logistic regression} \label{appendix:polyagamma}
We observe a vector of binary outcomes $y \in \{0,1\}^n$ depending on some feature matrix $X \in \mathbf{R}^{n\times p}$. We assume that $y_i \sim \mathcal{B}(p_i)$, where $\mathcal{B}(\cdot)$ denotes a Bernoulli distribution and $p_i$ is the probability of observing $y_i = 1$. $p_i = \text{logit}^{-1}(x_i^t \theta)$, where $\text{logit}$ is the logit transform and $\theta \in \mathbf{R}^p$ is the unknown parameter vector endowed with a multivariate Gaussian prior: $\theta \sim \mathcal{N}(m_0,V_0)$. In our applications we will set $m_0 = 0$.
We recall the two distributions we sample from in a Gibbs sampler (now for the subset of the data $y_s$):
\begin{eqnarray}
\tilde{p}(\theta |z_s, y_s) = \mathcal{N}(m_s,V_s) \\
\tilde{p}(z_{s,(j)} | \theta, y_{s,(j)}) = \PG(1, x_{s,(j)}\theta), 
\end{eqnarray}
where $x_{s,(j)}$ denotes the $j$th observation in shard $s$ and 
\begin{eqnarray}
V_s = \left( X^t_{s}\Omega X_{s} + s^{-1} V_0^{-1} \right)^{-1}, \\
m_s = V \left( X^t_{s} \kappa + + s^{-1} V_0^{-1}m_0 \right).
\end{eqnarray}
Here $\kappa = y - 1/2$ and $\Omega = \diag z_s$. 
The P\'{o}lya-Gamma distribution can be characterised as an infite sum of Gamma distributed random variables. In particular, $z \sim \PG(c,b)$ for $b>0$ and $c \in \mathbb{R}$ if $$z = \frac{1}{2 \pi^2} \sum_{k=1}^\infty \frac{g_k}{(k-1/2)^2+(c/(2 \pi))^2},$$ and $g_k \sim \Gamma(b, 1)~\text{i.i.d.}~\forall k$.
See \citet{polson2013bayesian} for more details on how to sample from the P\'{o}lya-Gamma distribution.

\subsection{Distributed RJMCMC} \label{appendix:rjmcmc}
In reversible jump Markov chain Monte Carlo one typically uses a prior on the model $p(m_k)$ that reflects the prior belief on the complexity of the model. A common choice is the beta-binomial prior, that has an inherent multiplicity correction \citep{wilson2010bayesian}. The posterior probability of model $p(m_k|y)$ is obtained by the relative time the samplers spends in this model. Using the identity 
$$
\frac{p(m_k|y)}{p(m_{k'}|y)} = \frac{p(y|m_k)}{p(y|m_{k'})} \frac{p(m_{k'})}{p(m)},
$$
we obtain the Bayes factor $\frac{p(y|m_k)}{p(y|m_{k'})}$ by correcting for the prior odds. %

Interestingly, the idea of splitting the data and running a sampler on the shards is also applicable in this setting. 
By using the decomposition in \eqref{eq:evidencedecomp} we obtain 
\begin{eqnarray} \label{eq:rjmcmc}
 \frac{p(y|m_1)}{p(y|m_2)}  = \frac{\prod_{s=1}^S \tilde{p}(y_s|m_1)   \alpha_1^S  \int_\Theta \prod_{s=1}^S \tilde{p}(\theta|y_s, m_1) d \theta }{\prod_{s=1}^S  \tilde{p}(y_s|m_2)   \alpha_{2}^S  \int_\Theta \prod_{s=1}^S \tilde{p}(\theta|y_s, m_2) d\theta }  \nonumber \\ 
= \left\{ \prod_{s=1}^S  \frac{\tilde{p}(m_1|y_s) p(m_2)}{\tilde{p}(m_2|y_s) p(m_1)} \right\} \frac{\alpha_1^S  \int_\Theta \prod_{s=1}^S \tilde{p}(\theta|y_s, m_1) d \theta }{\alpha_{2}^S \int_\Theta \prod_{s=1}^S \tilde{p}(\theta|y_s, m_2) d\theta }, 
\end{eqnarray}
where $\tilde{p}(m_1|y_s)/\tilde{p}(m_2|y_s)$ is obtained as the posterior odds ratio of the RJMCMC sampler of data subset $s$. 

The quantity $\tilde{p}(m_i|y)$ is typically available as the output of the sampler. The normalising constant of the subprior $\alpha_i$ for model $i$ is available for common priors. We can again use a normal approximation to $\tilde{p}(\theta|y_s, m_i)$ using the samples generated from the sampler. 
Consequently, the splitting approach can effectively be combined with a reversible jump algorithm as we suggest in Algorithm \ref{alg:rjmcmcmethod}.

\section{Additional details on the models and simulations}

\subsection{Logistic regression models}

\paragraph{The flights data}
This data set is available through the \texttt{nycflights13}\footnote{https://github.com/hadley/nycflights13} \texttt{R} package. This data set contains airline on-time data for all flights departing NYC in 2013. We create a binary indicator for the arrival delay if the flight arrived at least $1$ minute late. We use as explanatory variable the departure delay from the departing airport and to which airline the plane belongs to. After removing missing values we get a data set of $n=327,346$ observations. We consider two different models to explain the outcome $y$: (1) a model with a dummy variable per carrier and the departing delay in minutes. This yields $17$ different features. 
(2) a model with interactions between carrier and departing delay in addition to the other features. This yields a model with $32$ different features. 

\begin{figure}[ht]
\vskip -0.1in
\begin{center}
\centerline{\includegraphics[width=0.99\columnwidth]{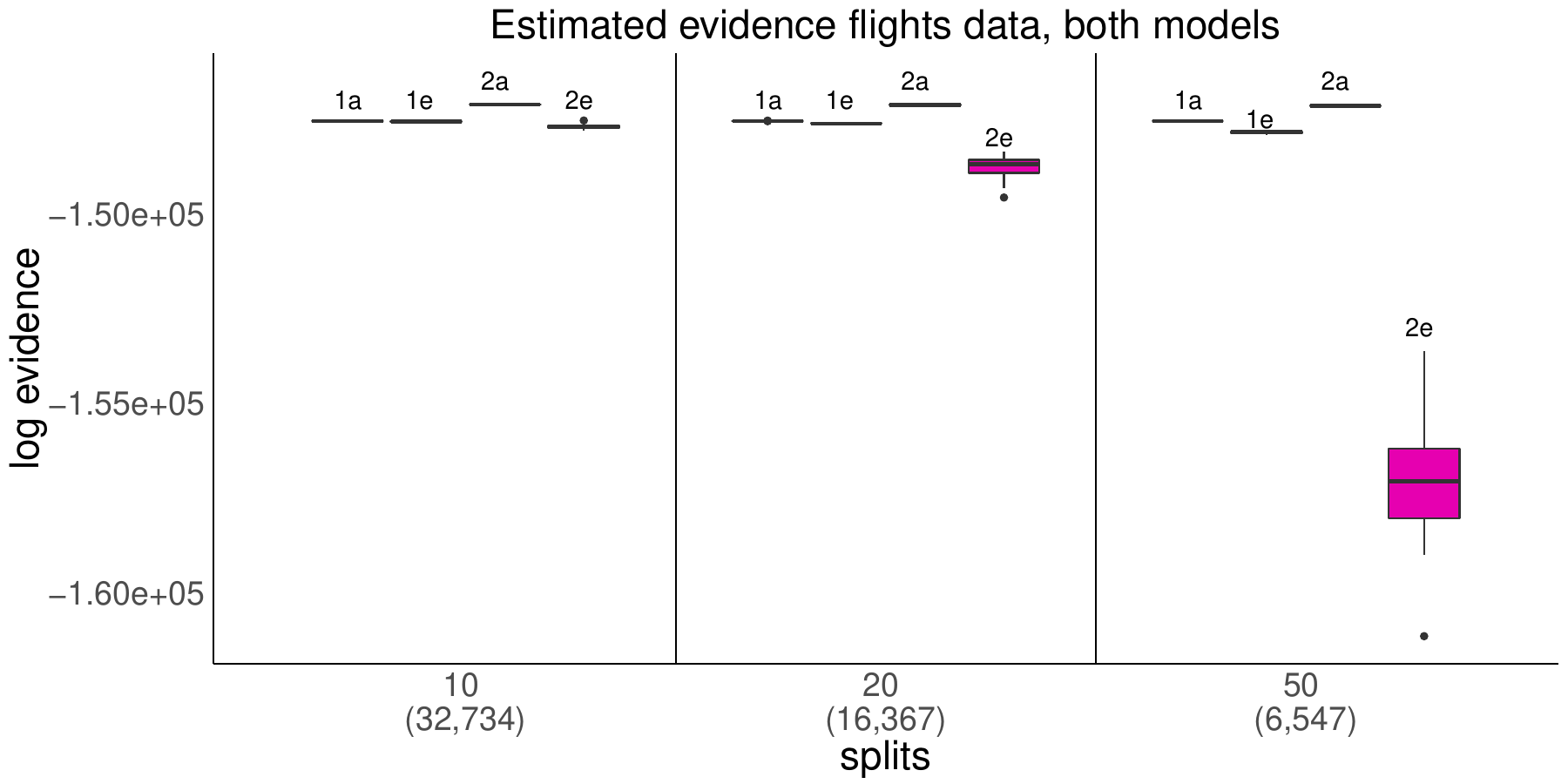}}
\caption{Comparison of the calculated normalising constant for a logistic regression on the flights data. As the number of splits increase, the estimates become unreliable for the conditionally conjugate approach. The left plot compares the conditional and approximate estimation. "1/2" stand for model 1/2 and "a/e" stand for the approximate (a) or conditional method (e). The average number of observations per split is indicated in parentheses. }
\label{fig:flightsdata_comp}
\end{center}
\vskip -0.2in
\end{figure}

For the approximate method one would clearly favour model (2) over model (1), as the evidence is higher.
This is maintained consistently as the number of splits increases despite a small downward bias, %
see Figure \ref{fig:flightsdata_comp}. 
The conditionally conjugate sampler starts breaking down for the more complex model (2) due to the variance of the estimator of $I_{\text{sub}}$. The estimated value of the model evidence for model (2) is below the value of model (1) for as little as 10 splits. Therefore a consistent model choice is not possible with the conditional approach in this case. 

\paragraph{The Higgs data}
Another binary regression experiment is based on a data set from particle physics where the classification problem consists in distinguishing between a signal process which produces Higgs bosons and a background process which does not. It is available through the UCI repository\footnote{https://archive.ics.uci.edu/ml/datasets/HIGGS}. The data set contains $11$ million observations and in total $28$ features.  The first 21 features are kinematic properties measured by the particle detectors in the accelerator. The last 7 features are high level features. Our aim in this task is to understand whether model (1) with the kinematic features or model (2) with the high level features is more likely \textit{a posteriori}. 

\subsection{Gaussian toy model} \label{appendix:gaussiantoymodel}
We use a synthetic Gaussian setting (linear model with 17 covariates, a log Normal prior on the variance term leading to a non-conjugate model). The true model $(\#6)$ contains all features, the other models omit one relevant variable each. High correlation of the features $(0.9)$ makes model choice difficult. The true parameters of the model are simulated from a normal distribution with varying mean and variance. The exact details are availabe in \url{https://github.com/alexanderbuchholz/distbayesianmc/blob/master/R/f_load_data.R#L61}. 
Figure \ref{fig:approx_error_gaussian} illustrates that the contribution of the product of normals (left plot) is rather low and the error is reasonably small (right plot). 
Figure \ref{fig:log_evidence_gaussian} shows again the presence of a downward bias when the number of splits increases, but this bias does not affect the ordering of the estimated marginal likelihoods. 

\begin{figure}[ht]
\begin{center}
\centerline{\includegraphics[width=0.99\columnwidth]{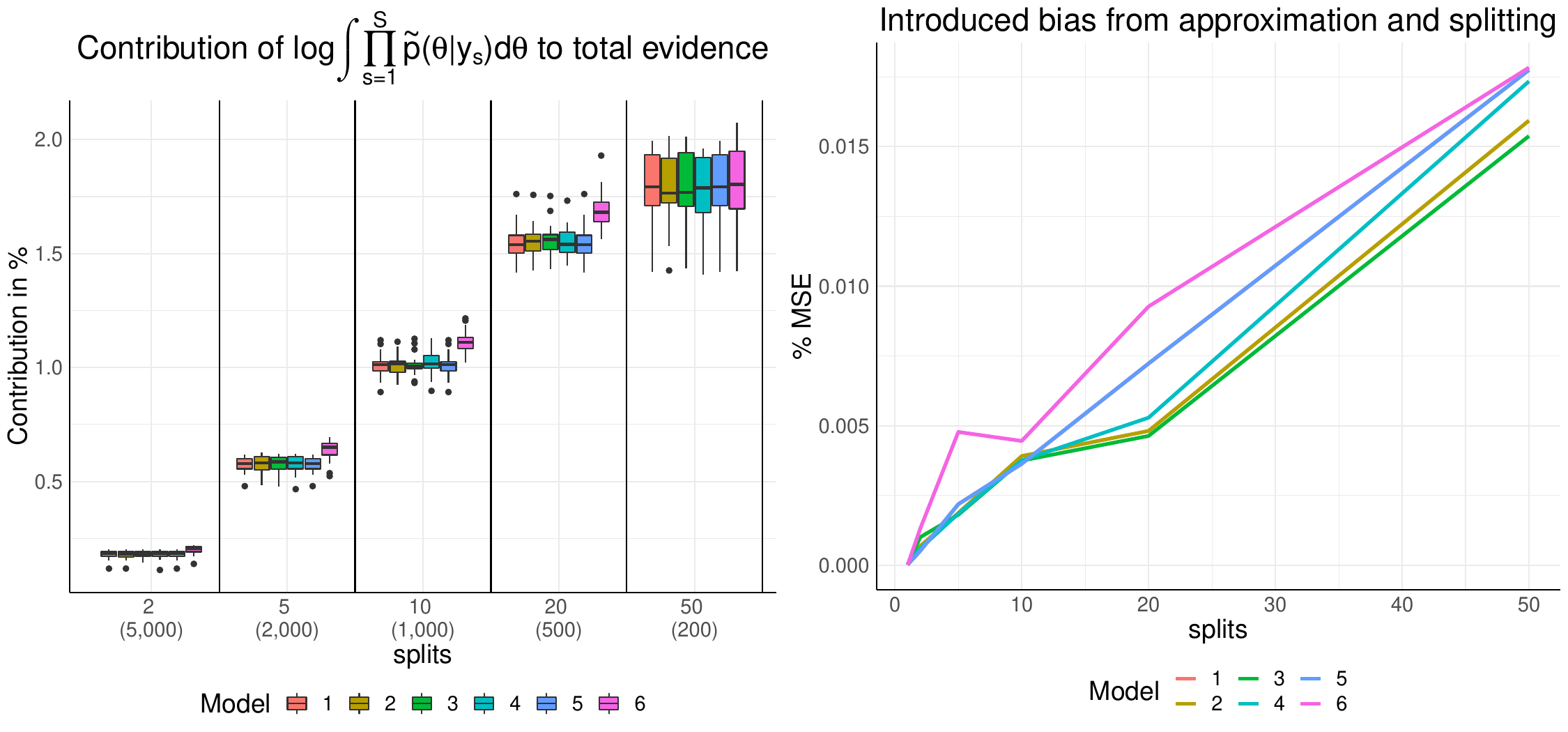}}
\caption{Left plot: contribution of $\log \int_\Theta \prod_{s=1}^S \tilde{p}(\theta|y_s) d \theta$ to the estimation of $\log p(y)$ in $\%$ over an increasing number of splits. Right plot: relative MSE as a function of the number of splits. Overall the contribution of $\log \int_\Theta \prod_{s=1}^S \tilde{p}(\theta|y_s) d \theta$ and error from the entire approximation stay small.}
\label{fig:approx_error_gaussian}
\end{center}
\vskip -0.2in
\end{figure}

\begin{figure}[ht]
\begin{center}
\centerline{\includegraphics[width=0.99\columnwidth]{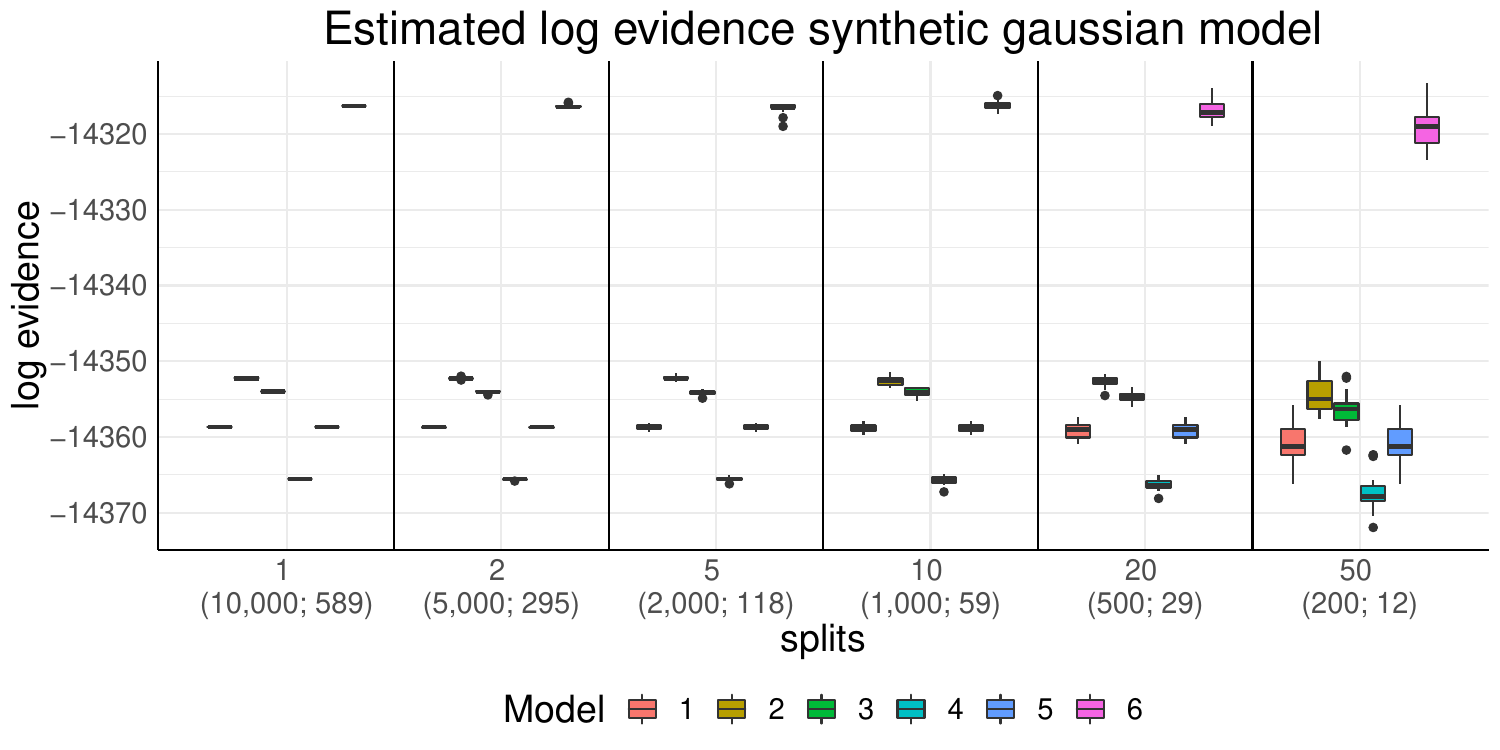}}
\caption{log marginal likelihood of the $6$ different models over an increasing number of splits. A slight downward bias is present. However, the bias does not change the ordering of the log marginal likelihoods.}
\label{fig:log_evidence_gaussian}
\end{center}
\vskip -0.2in
\end{figure}

\subsection{Sparse linear regression model}
We observe a vector of continuous outcomes $y \in \mathbf{R}^n$ depending on some feature matrix $X \in \mathbf{R}^{n\times p}$. We assume that $y_i \sim \mathcal{N}(\mu_i, \sigma^2)$, where $\mathcal{N}(\cdot)$ denotes a Gaussian distribution with mean $\mu_i$ and variance $\sigma^2$. $\mu_i = x_i^t \theta$, where $\theta \in \mathbf{R}^p$ is the unknown parameter vector endowed with a prior: $\theta \sim \mathcal{L}(0_p, \sigma_0 I_p)$, where $\mathcal{L}(\cdot, \cdot)$ is a Laplace (double exponential) prior. 

\paragraph{The HLA data set}
For our fourth experiment we compare the performance of a linear regression model with a Laplace prior on a real genetic data set from the UK Biobank database.
The selected outcome variable is mean red cell volume (MCV), taken from the full blood count assay and adjusted for various
technical and environmental covariates. Genome-wide imputed genotype data in expected allele dose
format are available on  $n=132,353$ study subjects. We consider 50 and 100 genetic
variants in the human leukocyte antigen (HLA) region of chromosome 6, selected so that the allelic scores has the highest absolute correlation with the outcome. The region was chosen as many
associations were discovered in a genome-wide scan using univariate models \citep{astle2016allelic}.

\subsection{Logistic regression model in the RJMCMC setting}
We observe a vector of binary outcomes $y \in \{0,1\}^n$ depending on some feature matrix $X \in \mathbf{R}^{n\times p}$. We assume that $y_i \sim \mathcal{B}(p_i)$, where $\mathcal{B}(\cdot)$ denotes a Bernoulli distribution and $p_i$ is the probability of observing $y_i = 1$. $p_i = \text{logit}^{-1}(x_i^t \theta)$, where $\text{logit}$ is the logit transform and $\theta \in \mathbf{R}^p$ is the unknown parameter vector endowed with a multivariate Gaussian prior: $\theta \sim \mathcal{N}(0_p, \sigma^2 I_p)$.
We generate simulated data as following:
We generate a highly correlated feature matrix $X$. Then we split the generated features in two and calculate $\mu_1 = X_{1} \theta_1$ and $\mu_2 = X_{2} \theta_2$, where $\theta_1 = [-1,1,0,0,1]$ and $\theta_2 = [-1,1,0.01,0,1]$. Thereby we have effectively half of the observations that will be better explained by including the third feature. 

We run the RJMCMC sampler for 10 million iterations where the first 2 million iterations are discarded as burn-in. We check for proper mixing by examining the trace plots of the sampler.

\section{Proofs}
\subsection{Proof of Proposition \ref{thm:prop1}}
\begin{proof}
We rewrite the the posterior as 

\begin{eqnarray*}
    p(\theta|y) &=&  \frac{p(y|\theta) p(\theta)}{p(y)} = \frac{\prod_{s=1}^S p(y_s|\theta) p(\theta)^{1/s}}{p(y)}  \\
  &=& \alpha^S \frac{\prod_{s=1}^S p(y_s|\theta) \tilde{p}(\theta)}{p(y)} ,
\end{eqnarray*}
where we have used the fact that $\tilde{p}(\theta) = p(\theta)^{1/S}/\alpha$. 
Now note that $p(y_s|\theta) \tilde{p}(\theta) = \tilde{p}(\theta|y_s) \tilde{p}(y_s)$, where all the distributions are correctly normalized, as indicated by the tilde. Plugging this decomposition in the previous equation we get
\begin{eqnarray} \label{eq:prodalphaposterior}
    p(\theta|y) 
  = \alpha^S \frac{\prod_{s=1}^S  \tilde{p}(\theta|y_s) \tilde{p}(y_s)}{p(y)}.
\end{eqnarray}
This is rewritten as 
\begin{eqnarray*}
    p(\theta|y) p(y)
  = \alpha^S \prod_{s=1}^S  \tilde{p}(\theta|y_s) \tilde{p}(y_s).
\end{eqnarray*}
And after integrating over $\theta$ we get
\begin{eqnarray*}
    \int p(\theta|y) d \theta p(y)
  = \alpha^S \int \prod_{s=1}^S  \tilde{p}(\theta|y_s) \tilde{p}(y_s) d \theta.
\end{eqnarray*}
As $p(\theta|y)$ integrates to one we obtain 
\begin{eqnarray*}
     p(y)
   = \alpha^S \prod_{s=1}^S  \tilde{p}(y_s) \int \prod_{s=1}^S  \tilde{p}(\theta|y_s) d \theta.
\end{eqnarray*}
\end{proof}

\subsection{Proof of Proposition \ref{thm:propisub}}
\begin{proof}
We recall the definition of $I_{\text{sub}}$:
$$
I_{\text{sub}} = \int_\Theta \prod_{s=1}^S \tilde{p}(\theta|y_s) d \theta.
$$
Let us now rewrite $\prod_{s=1}^S \tilde{p}(\theta|y_s)$ making use of the latent variables $z_s$ with latent subposterior $\tilde{p}(z_s|y_s)$:
\begin{eqnarray*}
&& \prod_{s=1}^S \tilde{p}(\theta|y_s) \\
&=& \prod_{s=1}^S \int_{\mathcal{Z}_s} \tilde{p}(\theta|y_s, z_s) \tilde{p}(z_s|y_s) d z_s \\
&=&  \int_{\mathcal{Z}_{1:S}}  \prod_{s=1}^S \left( \tilde{p}(\theta|y_s, z_s) \tilde{p}(z_s|y_s) \right) d z_1, \cdots d z_S \\
&=& \int_{\mathcal{Z}_{1:S}}   \left( \prod_{s=1}^S \tilde{p}(\theta|y_s, z_s) \right) \left( \prod_{s=1}^S \tilde{p}(z_s|y_s) \right) d z_1, \cdots d z_S
\end{eqnarray*}
When integrating out $\theta$ and plugging in the previous equation we obtain
\begin{eqnarray*}
&& \int_{\Theta} \prod_{s=1}^S \tilde{p}(\theta|y_s) d \theta \\
&=& \int_{\Theta} \int_{\mathcal{Z}_{1:S}}  \left( \prod_{s=1}^S \tilde{p}(\theta|y_s, z_s) \right) \left( \prod_{s=1}^S \tilde{p}(z_s|y_s) \right) d z_1, \cdots d z_S d \theta \\
&=&  \int_{\mathcal{Z}_{1:S}} \left(  \int_{\Theta} \prod_{s=1}^S \tilde{p}(\theta|y_s, z_s) d \theta \right) \left( \prod_{s=1}^S \tilde{p}(z_s|y_s) \right) d z_1, \cdots d z_S, 
\end{eqnarray*}
where we have used Fubini's theorem. The quantity 
$\int_{\Theta} \prod_{s=1}^S \tilde{p}(\theta|y_s, z_s) d \theta$
is available in closed form and therefore we end up with the expression
\begin{eqnarray*}
&& I_{\text{sub}} \\ 
&=& \int_{\mathcal{Z}_{1:S}} \left(  \int_{\Theta} \prod_{s=1}^S \tilde{p}(\theta|y_s, z_s) d \theta \right) \left( \prod_{s=1}^S \tilde{p}(z_s|y_s) \right) d z_1, \cdots d z_S \\
&=& \mathbf{E}_{\tilde{p}(z_{1:S} | y_{1:S})} \left(  \int_{\Theta} \prod_{s=1}^S \tilde{p}(\theta|y_s, z_s) d \theta \right) .
\end{eqnarray*}
\end{proof}

\subsection{Proof of Proposition \ref{thm:propisubvar}}
\begin{proof}
We recall the definition of the effective joint distribution of the latent variables if sampling independently from each subposterior:
$$
    \tilde{p}(z_{1:S}|y_{1:S}) = \prod_{s=1}^S \tilde{p}(z_s|y_s).
$$
Now let's rewrite the full augmented posterior of $\theta$:
\begin{eqnarray*}
&& p(\theta|y_{1:S}, z_{1:S}) \\
&=& \frac{p(y_{1:S}, z_{1:S}|\theta)p(\theta)}{p(y_{1:S}, z_{1:S})}, \\
&=& \frac{\alpha^S}{p(z_{1:S}|y_{1:S}) p(y_{1:S})} \prod_{s=1}^S p(y_s, z_s | \theta) \tilde{p}(\theta),
\end{eqnarray*}
where we have used our result from \eqref{eq:prodalphaposterior} (conditioning on the latent variable). Now 
\begin{eqnarray*}
&=& \frac{\alpha^S}{p(z_{1:S}|y_{1:S}) p(y_{1:S})} \prod_{s=1}^S  \tilde{p}(\theta|y_s, z_s) \tilde{p}(z_s|y_s) \tilde{p}(y_s), \\
&=& \frac{\alpha^S \prod_{s=1}^S \tilde{p}(y_s)}{ p(y_{1:S})} \frac{\prod_{s=1}^S \tilde{p}(z_s|y_s)  }{p(z_{1:S}|y_{1:S})} \prod_{s=1}^S  \tilde{p}(\theta|y_s, z_s),
\end{eqnarray*}
where we have used $\tilde{p}(\theta|y_s, z_s) \tilde{p}(z_s|y_s) \tilde{p}(y_s) = p(y_s, z_s | \theta) \tilde{p}(\theta)$.
Now we use 
$p(y_{1:S}) = \left(\prod_{s=1}^S \tilde{p}(y_s) \right) \alpha^S I_{\text{sub}} $ yielding 
$$
\frac{\left(\prod_{s=1}^S \tilde{p}(y_s) \right)}{p(y_{1:S})} = \frac{\alpha^{-S}}{I_{\text{sub}} } .
$$
Inserting this in the above expression yields 
$$
p(\theta|y_{1:S}, z_{1:S}) =\\
\frac{1}{ I_{\text{sub}} } \frac{\prod_{s=1}^S \tilde{p}(z_s|y_s)  }{p(z_{1:S}|y_{1:S})} \prod_{s=1}^S  \tilde{p}(\theta|y_s, z_s) .
$$
Integrating both sides over $\theta$ yields
$$
1 =
\frac{1}{ I_{\text{sub}} } \frac{\prod_{s=1}^S \tilde{p}(z_s|y_s)  }{p(z_{1:S}|y_{1:S})} \int_\Theta \prod_{s=1}^S  \tilde{p}(\theta|y_s, z_s) d \theta, 
$$
and after rearranging we get
$$
\int_\Theta \prod_{s=1}^S  \tilde{p}(\theta|y_s, z_s) d \theta = I_{\text{sub}} 
\frac{p(z_{1:S}|y_{1:S})}{\prod_{s=1}^S \tilde{p}(z_s|y_s)  } , 
$$ 
We now use the estimator from \eqref{eq:isubestimator} and plug in the previous equation:
\begin{eqnarray*}
\hat{I}_{\text{sub}} &=& \frac{1}{N} \sum_{i=1}^N \int_\Theta \prod_{s=1}^S \tilde{p}(\theta|y_s, z_s^i) d \theta \\
&=& I_{\text{sub}}  \frac{1}{N} \sum_{i=1}^N 
\frac{p(z_{1:S}^i|y_{1:S})}{\prod_{s=1}^S \tilde{p}(z_s^i|y_s)  } \\
&=& I_{\text{sub}}  \frac{1}{N} \sum_{i=1}^N 
\frac{p(z_{1:S}^i|y_{1:S})}{\tilde{p}(z_{1:S}^i|y_{1:S})  }, 
\end{eqnarray*}
where we have used the definition $\tilde{p}(z_{1:S}^i|y_{1:S}) = \prod_{s=1}^S \tilde{p}(z_s^i|y_s)$ in the last line. 
Assuming independent sampling from the the latent variable subposterior we get an expression for the variance as 
$$
\Var \hat{I}_{\text{sub}} = \frac{I_{\text{sub}}^2}{N} 
\Var_{\tilde{p}(z_{1:S}|y_{1:S})} \left[ \frac{p(z_{1:S}|y_{1:S})}{\tilde{p}(z_{1:S}|y_{1:S})} \right].
$$
\end{proof}

\subsection{Proof of Proposition \ref{prop:isub_normal}}
Let $I_{\text{sub}}$ denote the true value of the subposterior integral, and let  $\widehat{I}_{\text{sub}}$ denote the proposed normal approximation:
\begin{align*}
    I_{\text{sub}} &= \int_{\Theta} \prod_{s=1}^{S}\tilde{p}(\theta | y_{s})  \ d\theta, \\
    \widehat{I}_{\text{sub}} &= \int_{\Theta} \prod_{s=1}^{S}\mathcal{N}(\theta | \mu_{s}, \Sigma_{s}) \ d\theta.
\end{align*}
The global normal approximation to the posterior $\mathcal{N}(\theta | \mu, \Sigma) \propto \prod_{s=1}^{S}\mathcal{N}(\theta|\mu_{s}, \Sigma_{s})$ has the the form
\begin{align}
    \mathcal{N}(\theta | \mu, \Sigma) &= \dfrac{\prod_{s=1}^{S}\mathcal{N}(\theta | \mu_{s}, \Sigma_{s})}{\widehat{I}_{\text{sub}}}. \label{eq:normal_appx}
\end{align}
A simple identity is
\begin{align}
    I_{\text{sub}}  &= \int_{\Theta} \prod_{s=1}^{S}\tilde{p}(\theta | y_{s})  \ d\theta, \\
    &= \int_{\Theta} \mathcal{N}(\theta | \mu, \Sigma) \dfrac{\prod_{s=1}^{S}\tilde{p}(\theta | y_{s})}{\mathcal{N}(\theta | \mu, \Sigma)}  \ d\theta. \label{eq:isub_identity}
\end{align}
Substituting  \eqref{eq:normal_appx} into \eqref{eq:isub_identity} yields
\begin{align*}
     I_{\text{sub}}  &= \widehat{I}_{\text{sub}}\int_{\Theta}\mathcal{N}(\theta | \mu, \Sigma) \dfrac{\prod_{s=1}^{S}\tilde{p}(\theta | y_{s})}{\prod_{s=1}^{S}\mathcal{N}(\theta | \mu_{s}, \Sigma_{s})}   \ d\theta \\
     &= \widehat{I}_{\text{sub}}\int_{\Theta}\mathcal{N}(\theta | \mu, \Sigma)\prod_{s=1}^{S} \dfrac{\tilde{p}(\theta | y_{s})}{\mathcal{N}(\theta | \mu_{s}, \Sigma_{s})}  \ d\theta.
\end{align*}
As $\tilde{p}({\theta} | y_{s})=0$ for $\theta \in \overline{\Theta}$, the domain of integration can be extended, and we can express the result in terms of an expectation. 
\begin{align*}
   I_{\text{sub}}  &= {\widehat{I}_{\text{sub}}}\int_{\Theta \cup \overline{\Theta}}\mathcal{N}(\theta | \mu, \Sigma)\prod_{s=1}^{S} \dfrac{\tilde{p}(\theta | y_{s})}{\mathcal{N}(\theta | \mu_{s}, \Sigma_{s})}  \ d\theta, \\
   &= {\widehat{I}_{\text{sub}}}{{\mathbb{E}_{\mathcal{N}({\theta}; {{\mu}}, {{\Sigma}})}\left[ \prod_{s=1}^{S}\dfrac{\widetilde{p}({\theta} \mid {y}_{s})}{\mathcal{N}({\theta}; {{\mu}}_{s}, {{\Sigma}}_{s})} \right]}}.
\end{align*}
Substituting the definitions of $I_{\text{sub}}$ and $\widehat{I}_{\text{sub}}$ gives the result presented in Proposition \ref{prop:isub_normal}.

\subsection{Proof of Proposition \ref{prop:log_error}}
We assume that the subposterior normal approximations satisfy the density ratio bounds
        \begin{align*}
            \max_{s=1, \ldots, S} \ \sup_{{\theta} \in {\Theta}} \dfrac{\widetilde{p}({\theta} \mid {y}_{s})}{\mathcal{N}({\theta}; {\mu}_{s}, {\Sigma}_{s})}  \le A, \qquad      \max_{s=1, \ldots, S} \ \sup_{{\theta} \in {\Theta}} \dfrac{\mathcal{N}({\theta}; {\mu}_{s}, {\Sigma}_{s})}{\widetilde{p}({\theta} \mid {y}_{s})} \le B,
        \end{align*}
Using the upper bound $A$ we have that
\begin{align}
    \mathbb{E}_{\mathcal{N}({\theta}; {{\mu}}, {{\Sigma}})}\left[ \prod_{s=1}^{S}\dfrac{\widetilde{p}({\theta} \mid {y}_{s})}{\mathcal{N}({\theta}; {{\mu}}_{s}, {{\Sigma}}_{s})} \right] &\le  \mathbb{E}_{\mathcal{N}({\theta}; {{\mu}}, {{\Sigma}})}\left[\prod_{s=1}^{S}A\right] = A^{S}. \label{eq:a_bound}
\end{align}
For $\theta \in \Theta$, it must hold that 
\begin{align*}
\prod_{s=1}^{S}\dfrac{\widetilde{p}({\theta} \mid {y}_{s})}{\mathcal{N}({\theta}; {{\mu}}_{s}, {{\Sigma}}_{s})} &\ge \prod_{s=1}^{S}\dfrac{1}{B}. %
\end{align*}
Using the law of total expectation we have that
\begin{align}
  \mathbb{E}_{\mathcal{N}({\theta}; {{\mu}}, {{\Sigma}})}\left[ \prod_{s=1}^{S}\dfrac{\widetilde{p}({\theta} \mid {y}_{s})}{\mathcal{N}({\theta}; {{\mu}}_{s}, {{\Sigma}}_{s})} \right] &=   \mathbb{E}_{\mathcal{N}({\theta}; {{\mu}}, {{\Sigma}})}\left[  \prod_{s=1}^{S}\dfrac{\widetilde{p}({\theta} \mid {y}_{s})}{\mathcal{N}({\theta}; {{\mu}}_{s}, {{\Sigma}}_{s})} \ \middle\vert \   \theta \in \Theta \right]\mathbb{E}_{\mathcal{N}({\theta}; {{\mu}}, {{\Sigma}})}[\mathbbm{1}(\theta \in \Theta)] \nonumber \\
  &\ge  \mathbb{E}_{\mathcal{N}({\theta}; {{\mu}}, {{\Sigma}})}\left[ 1/B^{S} \mid   \theta \in \Theta \right]\mathbb{E}_{\mathcal{N}({\theta}; {{\mu}}, {{\Sigma}})}[\mathbbm{1}(\theta \in \Theta)] \nonumber \\
  &= B^{-S}\mathbb{E}_{\mathcal{N}({\theta}; {{\mu}}, {{\Sigma}})}[\mathbbm{1}(\theta \in \Theta)]. \label{eq:b_bound}
\end{align}
Therefore
\begin{align*}
  -S \log B  + \log \mathbb{E}_{\mathcal{N}({\theta}; {{\mu}}, {{\Sigma}})}[\mathbbm{1}(\theta \in \Theta)] \le \log \dfrac{\int_{\Theta} \prod_{s=1}^{S}\tilde{p}(\theta | y_{s})  \ d\theta}{\int_{\Theta} \prod_{s=1}^{S}\mathcal{N}(\theta | \mu_{s}, \Sigma_{s}) \ d\theta} \le S \log A.
\end{align*}
Multiplying the numerator and denominator by $\alpha^S\prod_{s=1}^{S}\tilde{p}(y_{s})$ gives
\begin{align*}
  -S \log B + \log \mathbb{E}_{\mathcal{N}({\theta}; {{\mu}}, {{\Sigma}})}[\mathbbm{1}(\theta \in \Theta)] \le \log \dfrac{\alpha^S\prod_{s=1}^{S}\tilde{p}(y_{s})\int_{\Theta} \prod_{s=1}^{S}\tilde{p}(\theta | y_{s})  \ d\theta}{\alpha^S\prod_{s=1}^{S}\tilde{p}(y_{s})\int_{\Theta} \prod_{s=1}^{S}\mathcal{N}(\theta | \mu_{s}, \Sigma_{s}) \ d\theta} \le S \log A.
\end{align*}
Then using the definition of $\widehat{p}(y)$ and Proposition \ref{thm:prop1} we have the final result
\begin{align*}
  -S \log B + \mathbb{E}_{\mathcal{N}({\theta}; {{\mu}}, {{\Sigma}})}[\mathbbm{1}(\theta \in \Theta)] \le \log \dfrac{p(y)}{\widehat{p}(y)} \le S \log A.
\end{align*}

\subsection{Proof of Corollary \ref{cor:growth}}
From Proposition \ref{prop:isub_normal}, the correction factor $p(y)/\hat{p}(y)$ can be written as
\begin{align}
  \mathbb{E}_{\mathcal{N}({\theta}; {{\mu}}, {{\Sigma}})}\left[ \prod_{s=1}^{S}\dfrac{\widetilde{p}({\theta} \mid {y}_{s})}{\mathcal{N}({\theta}; {{\mu}}_{s}, {{\Sigma}}_{s})}\right] &= 
    \mathbb{E}_{\mathcal{N}({\theta}; {{\mu}}, {{\Sigma}})}\left[ \prod_{j=1}^{S} \exp\left( \dfrac{1}{S} \sum_{s=1}^{S}\log \dfrac{\tilde{p}(\theta | y_{s})}{\mathcal{N}(\theta | \mu_{s}, \Sigma_{s})}\right)\right].  \label{eq:exp_identity}
\end{align}
We  assume  that
\begin{align*}
    \sup_{{\theta} \in {\Theta}}  \left\lvert \dfrac{1}{S}\sum_{s=1}^{S}\log \dfrac{\widetilde{p}(\theta | y_{s})}{\mathcal{N}(\theta | \mu_{s}, \Sigma_{s})} \right\rvert &= \OO_{p}(1)
\end{align*}
where $y_{s}, \mu_{s}$ and $\Sigma_{s}$ are treated as random. This implies that for any $\epsilon > 0$ there exists a finite $M > 0$ and $S' > 0$ such that
\begin{align*}
    &\Pr\left( \exp(-M)  \le \sup_{{\theta} \in {\Theta}}  \exp\left\lbrace\dfrac{1}{S} \sum_{s=1}^{S} \log \dfrac{\widetilde{p}(\theta | y_{s})}{\mathcal{N}(\theta | \mu_{s}, \Sigma_{s})} \right\rbrace \le \exp(M)\right) \ge  1-\epsilon,  \quad \forall   S > S'
\end{align*}
Therefore, it holds with probability at least $1-\epsilon$ for $S > S'$,
\begin{align*}
\mathbb{E}_{\mathcal{N}({\theta}; {{\mu}}, {{\Sigma}})}\left[ \prod_{j=1}^{S} \exp(-M)\right] \le \mathbb{E}_{\mathcal{N}({\theta}; {{\mu}}, {{\Sigma}})}\left[ \prod_{j=1}^{S} \exp\left( \dfrac{1}{S} \sum_{s=1}^{S}\log \dfrac{\tilde{p}(\theta | y_{s})}{\mathcal{N}(\theta | \mu_{s}, \Sigma_{s})}\right)\right]  &\le \mathbb{E}_{\mathcal{N}({\theta}; {{\mu}}, {{\Sigma}})}\left[ \prod_{j=1}^{S} \exp(M)\right], \\
\exp(-SM) \le \mathbb{E}_{\mathcal{N}({\theta}; {{\mu}}, {{\Sigma}})}\left[ \prod_{j=1}^{S} \exp\left( \dfrac{1}{S} \sum_{s=1}^{S}\log \dfrac{\tilde{p}(\theta | y_{s})}{\mathcal{N}(\theta | \mu_{s}, \Sigma_{s})}\right)\right]  &\le \exp(SM).
\end{align*}
As such, for any $\epsilon > 0$ there exists a finite $M > 0$ and $S' > 0$ such that with probability at least $1-\epsilon$,
\begin{align*}
 -SM \le \log  \mathbb{E}_{\mathcal{N}({\theta}; {{\mu}}, {{\Sigma}})}\left[ \prod_{s=1}^{S}\dfrac{\widetilde{p}({\theta} \mid {y}_{s})}{\mathcal{N}({\theta}; {{\mu}}_{s}, {{\Sigma}}_{s})}\right] \le SM, \qquad \forall S > S'
\end{align*}
where we have made use of \eqref{eq:exp_identity}. Therefore, we conclude
\begin{align*}
 \log \dfrac{p(y)}{\widehat{p}(y)} &=  \log  \mathbb{E}_{\mathcal{N}({\theta}; {{\mu}}, {{\Sigma}})}\left[ \prod_{s=1}^{S}\dfrac{\widetilde{p}({\theta} \mid {y}_{s})}{\mathcal{N}({\theta}; {{\mu}}_{s}, {{\Sigma}}_{s})}\right] = \OO_{p}(S).
 \end{align*}

\end{document}